\documentclass[letterpaper]{amsart}

\usepackage{graphics}
\usepackage{amssymb}

\newtheorem{lemma}{Lemma}
\newtheorem{theorem}{Theorem}
\newtheorem{corollary}{Corollary}
\newtheorem{proposition}{Proposition}
\newtheorem{problem}{Problem}
\theoremstyle{definition}
\newtheorem{definition}{Definition}
\newtheorem{example}{Example}

\DeclareMathOperator*\Fin{Fin}
\DeclareMathOperator*\ran{ran}
\renewcommand\and{\text{ and }}
\begin{document}
\title{Coexistence in interval effect algebras}
\author{Gejza Jen\v ca}
\address{
Department of Mathematics and Descriptive Geometry\\
Faculty of Civil Engineering\\
Radlinsk\' eho 11\\
	Bratislava 813 68\\
	Slovak Republic
}
\email{gejza.jenca@stuba.sk}
\thanks{
This research is supported by grants VEGA G-1/3025/06,G-1/0500/09 of M\v S SR,
Slovakia and by the Slovak Research and Development Agency under the contracts
No. APVT-51-032002, APVV-0071-06.
}
\subjclass{Primary: 03G12, Secondary: 06F20, 81P10} 
\keywords{effect algebra, coexistent observables} 
\begin{abstract}
Motivated by the notion of coexistence of effect-valued observables,
we give a characterization of coexistent subsets of interval effect algebras.
\end{abstract}
\maketitle

\section{Introduction and motivation}

Let $\mathbb H$ be a Hilbert space, let $\mathcal S(\mathbb H)$ be the
partially ordered abelian group of all bounded self-adjoint operators on
$\mathbb H$ equipped with the usual order.  We write $\mathcal E(\mathbb H)$
for the closed interval $[0,I]_{\mathcal S(\mathbb H)}=\{A\in\mathcal S(\mathbb
H):0\leq A\leq I\}$, where $I$ is the identity operator on $\mathbb H$. 
The elements of $\mathcal E(\mathbb H)$ are called {\em Hilbert space effects}.

From the algebraic point of view, $\mathcal E(\mathbb H)$, equipped with addition,
is the canonical example of an {\em interval effect algebra} \cite{FouBen:EAaUQL},\cite{BenFou:IaSEA}.

Let $(\Omega,\mathcal A)$ be a measurable space.  An {\em observable} is a
mapping $\alpha:\mathcal A\to\mathcal E(\mathbb H)$ 
such that $\alpha(\emptyset)=0$, $\alpha(\Omega)=1$ and for every
pairwise disjoint system $\{X_i\}_{i\in\mathbb N}$ of measurable sets,
$\alpha(\bigcup_{i\in\mathbb N}X_i)=\sum_{i\in\mathbb N}\alpha(X_i)$.
That means, an observable is a normalized positive operator measure
$\alpha:\mathcal A:\to \mathcal E(\mathbb H)$.

In the unsharp observable approach to quantum mechanics, 
observables represent measurable quantities,
see for example \cite{BusLahMit:TQToM}, \cite{BusGraLah:OQP}.

\begin{definition}
Let $(\Omega_1,\mathcal A_1)$ and $(\Omega_2,\mathcal A_2)$ be measurable spaces.
Observables $\alpha_1:\mathcal A_1\to \mathcal E(\mathbb H)$ and
$\alpha_2:\mathcal A_2\to \mathcal E(\mathbb H)$ are {\em coexistent} if there is
a measurable space $(\Omega,\mathcal A)$ and an observable 
$\alpha:\mathcal A\to \mathcal E(\mathbb H)$ such that
$$
\ran(\alpha_1)\cup\ran(\alpha_2)\subseteq\ran(\alpha),
$$
where $\ran(.)$ denotes the range of an observable.
\end{definition}

In mathematical physics, the notion of coexistence of observables 
is used to describe 
the possibility of measuring together two quantities. The concept
of coexistence of observables is due to Ludwig \cite{Lud:VeaGdQuapT} and it was
further investigated in \cite{Hel:Ceiqm}, \cite{Kra:SEaO}, \cite{Lud:FoQM}.
See also \cite{LahPul:COaEiQM}, \cite{LahPulYli:COaEiCA}, \cite{LahPul:CvFCoQO},
\cite{Pul:CaDoE} for recent results on coexistence of observables
and \cite{Gud:CoQE}, \cite{BusSch:CoQE}, \cite{StaReiHeiZim:CoQE}
for investigations concerning coexistence of qubit effects.

Obviously, coexistence of any pair of observables depends only on their
ranges; two observables are coexistent if and only if the union of their ranges can be
embedded into a range of an observable. This fact gives rise to the following
problem.

\begin{problem}
Give a characterization of such sets $S$ 
of effects on a Hilbert space $\mathbb H$ such that
there is a measurable space $(\Omega,\mathcal A)$ and an observable 
$\alpha:\mathcal A\to\mathcal E(\mathbb H)$ with
$S\subseteq\ran(\alpha)$.
\end{problem}

If $S$ consists only of orthogonal projections (that means, idempotent effects),
then the answer is simple: $S$ is
a subset of the range of an observable if and only if the elements of $S$ commute.
On the other hand, if there are non-idempotent effects in $S$, the situation
is much more complicated. 

In the present paper, we examine a more general problem: 
\begin{problem}
Let $E$ be an interval effect algebra.
Characterize subsets $S$ of $E$ 
such that there is a Boolean algebra $B$ and a morphism of effect algebras 
$\alpha:B\to E$ with
$S\subseteq\ran(\alpha)$.
\end{problem}

This can be considered as an abstract algebraic version of Problem 1.  In the
present paper, we prove that, given a subset $S$ of an interval effect algebra
$E$, there exist a Boolean algebra $B$ and a morphism $\alpha:B\to E$ with
$S\subseteq\ran(\alpha)$ if and only if there is a mapping $\beta:\Fin(S)\to
E$ satisfying certain properties.  We call such mappings {\em witness
mappings}.
In \cite{Jen:CSMiEA}, we introduced and studied a similar but 
more complicated notion, called {\em compatibility support mappings}, 
in the context of general effect algebras.
In the present paper, the more restrictive setting of interval effect algebras
allows us to introduce a simpler and much more intuitive notion of a witness
map, and we can still apply our results to our most important example, namely
$\mathcal E(\mathbb H)$.

\section{Definitions and basic relationships}

\subsection{Effect algebras}

An {\em effect algebra} 
is a partial algebra $(E;\oplus,0,1)$ with a binary 
partial operation $\oplus$ and two nullary operations $0,1$ satisfying
the following conditions.
\begin{enumerate}
\item[(E1)]If $a\oplus b$ is defined, then $b\oplus a$ is defined and
		$a\oplus b=b\oplus a$.
\item[(E2)]If $a\oplus b$ and $(a\oplus b)\oplus c$ are defined, then
		$b\oplus c$ and $a\oplus(b\oplus c)$ are defined and
		$(a\oplus b)\oplus c=a\oplus(b\oplus c)$.
\item[(E3)]For every $a\in E$ there is a unique $a'\in E$ such that
		$a\oplus a'$ exists and $a\oplus a'=1$.
\item[(E4)]If $a\oplus 1$ is defined, then $a=0$.
\end{enumerate}

Effect algebras were introduced by Foulis and Bennett in their paper 
\cite{FouBen:EAaUQL}.

In their paper~\cite{KopCho:DP}, Chovanec and K\^ opka introduced
an essentially equivalent structure called {\em D-poset}. Their definition
is an abstract algebraic version the {\em D-poset of fuzzy sets},
introduced by K\^ opka in the paper~\cite{Kop:DPFS}.

Another equivalent structure was introduced by Giuntini and
Greuling in~\cite{GiuGre:TaFLfUP}. We refer to~\cite{DvuPul:NTiQS} for more
information on effect algebras and related topics.

The class of effect algebras is (essentially) 
a common superclass of several important classes of algebras: orthomodular lattices
\cite{Kal:OL,Ber:OLaAA}, orthoalgebras \cite{FouRan:OS1BC}, MV-algebras \cite{Cha:AAoMVL,Mun:IoAFCSAiLSC}. 
In particular, every Boolean algebra is an effect algebra, if we introduce a
partial operation $\oplus$ such that $a\oplus b$ exists if and only if $a\wedge b=0$ and then
we put $a\oplus b:=a\vee b$.

\subsection{Properties of effect algebras}

In an effect algebra $E$, we write $a\leq b$ if and only if there is
$c\in E$ such that $a\oplus c=b$.  It is easy to check that for every effect
algebra $E$, $\leq$ is a partial order on $E$.  Moreover, it is possible to introduce
a new partial operation $\ominus$; $b\ominus a$ is defined if and only if $a\leq
b$ and then $a\oplus(b\ominus a)=b$.  It can be proved that, in an effect
algebra, $a\oplus b$ is defined if and only if $a\leq b'$ if and only if $b\leq
a'$. In an effect algebra, we write $a\perp b$ if and only if $a\oplus b$ exists.

A finite family $(a_1,\dots,a_n)$ of elements of an effect algebra is called
{\em orthogonal} if and only if the sum $a_1\oplus\dots\oplus a_n$ exists. An orthogonal
family $(a_1,\dots,a_n)$ is a {\em decomposition of unit} if and only if $a_1\oplus\dots a_n=1$.

\subsection{Morphisms of effect algebras}

Let $E,F$ be affect algebras. A mapping $\phi:E\to F$ is a {\em morphism of effect algebras}
if and only if the following conditions are satisfied:
\begin{enumerate}
\item[(EM1)]$\phi(1)=1$.
\item[(EM2)]If $a,b\in E$, $a\perp b$ then $\phi(a)\perp\phi(b)$ and $\phi(a\oplus b)=\phi(a)\oplus\phi(b)$.
\end{enumerate}
We note that every morphism of effect algebras is isotone. Moreover, every 
morphism of effect algebras preserves the $0$ element, as well as the unary operation $x\mapsto x'$ and the partial binary operation $\ominus$.

A bijective morphism of effect algebras $\phi:E\to F$ such that $\phi^{-1}$ is a
morphism of effect algebras is called {\em an isomorphism of effect algebras}.

Let $B$ be a Boolean algebra and let $E$ be an effect algebra. A morphism of
effect algebras $\alpha:B\to E$ is called {\em an observable}. If $B$ is
finite, then we say that $\alpha$ is a {\em a simple observable}.

\begin{definition}
We say that a subset $S$ of an effect algebra is {\em coexistent} if there exists
a Boolean algebra $B$ and an observable $\alpha:B\to E$ such that $S\subseteq\alpha(B)$.
\end{definition}

If is easy to check that a subset $\{a,b\}$ of an effect algebra $E$ is coexistent if and only if
there is $c\in E$ such that $c\leq a,b$ and $a\perp b\ominus c$. Following 
the terminology in \cite{Gud:CoQE},
we say that $c$ is {\em the witness element for $a,b$.}

\subsection{Partially ordered abelian groups}

Let $G$ be an (additive) abelian group. We say that $G$ is a {\em partially ordered abelian group} if and only if $G$ is equipped with a partial order that is compatible with addition,
that means, for all $a,b,t\in G$,
$$
a\geq b\Longrightarrow a+t\geq b+t.
$$

\begin{example}\label{ex:selfadjoint}
Let $\mathbb H$ be a Hilbert space, let $\mathcal S(\mathbb H)$ be the
set of all bounded self-adjoint operators on $\mathbb H$. For 
$A,B\in\mathcal S(\mathbb H)$, write $A\leq B$ if and only if,
for all $x\in\mathbb H$, $\langle Ax,x\rangle\leq\langle Bx,x\rangle$.
Then $(\mathcal S(\mathbb H),+,0)$ is a partially ordered abelian group.
\end{example}

For a partially ordered abelian group $G$, we write 
$$
G^+=\{a\in G: a\geq 0\}.
$$
The elements of $G^+$ are called {\em positive}.
Obviously, $G^+$ is a submonoid of $G$. Moreover, $G^+$ is {\em conical}, that
means, if $a,b\in G^+$ and $a+b=0$, then $a=b=0$. 

It is easy to see that
there is a one-to one correspondence between partial orders on $G$ and conical 
submonoids of $G$.

\subsection{Order units}

Let $G$ be a partially ordered abelian group.
We say that $u\in G^+$ is an order unit if and only if for every $a\in G$ there is
$n\in\mathbb N$ such that $n.u\geq a$. 

A pair $(G,u)$, where $G$ is a partially ordered abelian group and $u$ 
is an order unit of $G$ is called {\em a unital group}.

Let $(G_1,u_1)$, $(G_2,u_2)$ be unital groups. A mapping $\phi:G_1\to G_2$ is
a {\em morphism of unital groups} if and only if  $\phi$ is a group homomorphism,
$x\geq y$ implies $\phi(x)\geq\phi(y)$ and $\phi(u_1)=u_2$.

For a morphism of unital groups, we write $\phi:(G_1,u_1)\to(G_2,u_2)$.

\begin{example}
In $\mathcal S(\mathbb H)$, the identity operator is
an order unit.
\end{example}

\subsection{Interval effect algebras}

One can construct examples of effect algebras from an arbitrary partially
ordered abelian group $(G,\leq)$ in the following way: Choose any positive
$u\in G$; then, for $0\leq a,b\leq u$, define $a\oplus b$ if and only if
$a+b\leq u$ and put $a\oplus b=a+b$.  With such partial operation $\oplus$, the
closed interval 
$$
[0,u]_G=\{x\in G:0\leq x\leq u\}
$$ 
becomes an effect algebra $([0,u]_G,\oplus,0,u)$.  Effect
algebras which arise from partially ordered abelian groups in this way are
called {\em interval effect algebras}, see \cite{BenFou:IaSEA}.

More generally, we say that {\em $E$ is an interval effect algebra in the
group $G$} if $E$ is of the form $[0,u]_{G^+}$, of isomorphic to such 
an effect algebra.

\begin{example}
The prototype interval effect algebra is 
the {\em standard effect algebra}
$\mathcal E(\mathbb H)=[0,I]_{\mathcal S(\mathbb H)}$.
\end{example}

\subsection{MV-algebras and MV-effect algebras}

An {\em MV-algebra} (c.f. \cite{Cha:AAoMVL}, \cite{Mun:IoAFCSAiLSC}) is a
$(2,1,0)$-type algebra $(M;\boxplus,\lnot,0)$, 
such that $\boxplus$
satisfies the
identities $(x\boxplus y)\boxplus z=x\boxplus (y\boxplus z)$,
$x\boxplus y=y\boxplus x$, $x\boxplus0=x$, $\lnot\lnot x=x$, $x\boxplus\lnot 0=\lnot 0$ and
$$
x\boxplus\lnot(x\boxplus\lnot y)=y\boxplus\lnot(y\boxplus\lnot x)\text{.}
$$
On every MV-algebra, a partial order $\leq$ is defined
by the rule
$$
x\leq y\Longleftrightarrow y=x\boxplus\lnot(x\boxplus\lnot y).
$$
In this partial order, every MV-algebra is a distributive lattice bounded by
$0$ and $\lnot 0$.

An {\em MV-effect algebra} is a lattice ordered effect algebra $M$ in which,
for all $a,b\in M$, 
\begin{equation}\label{eq:MV}
(a\lor b)\ominus a=b\ominus (a\land b).
\end{equation}
 It is proved
in \cite{ChoKop:BDP} that there is a natural, one-to one correspondence between
MV-effect algebras and MV-algebras given by the following rules.
Let $(M,\oplus,0,1)$ be an MV-effect algebra. Let $\boxplus$ 
be a total operation given by $x \boxplus y=x\oplus(x'\land y)$. Then
$(M,\boxplus,',0)$ is an MV-algebra. Similarly, let 
$(M,\boxplus,\lnot,0)$ be an MV-algebra. Restrict the operation
$\boxplus$ to the pairs $(x,y)$ satisfying $x\leq\lnot y$ and call the
new partial operation $\oplus$. Then $(M,\oplus,0,\lnot 0)$ is an MV-effect algebra.

We note that every Boolean algebra is an MV-effect algebra. 

According to \cite{Mun:IoAFCSAiLSC}, every MV-effect algebra is isomorphic to an interval $[0,u]_G$, where $G$ is a lattice-ordered group. Thus, every MV-effect algebra is an interval algebra.

\subsection{Group valued measures and ambient groups}

Let $E$ be an effect algebra and let $(G_2,u_2)$ be a unital group. A morphism
of effect algebras from $E$ to the interval effect algebra $[0,u_2]_{G_2}$ 
is called a {\em group-valued measure}.

\begin{proposition}\label{prop:ambient}
\cite{BenFou:IaSEA}
Let $E$ be an interval effect algebra. There exists a unital group
$(G_1,u_1)$ such that $E=[0,u_1]_{G_1}$,
$E$ generates $G_1$ and for every unital group $(G_2,u_2)$ and
every group valued measure $\phi:E\to [0,u_2]_{G_2}$, $\phi$ 
extends to a unique morphism of unital groups $\widehat\phi:(G_1,u_1)\to (G_2,u_2)$.
The unital group $(G_1,u_1)$ is unique, up to isomorphism.
\end{proposition}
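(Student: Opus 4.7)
The plan is to construct $(G_1, u_1)$ as the universal enveloping group of $E$ by generators and relations, and then to identify the interval $[0, u_1]_{G_1}$ with $E$. First I would let $F$ be the free abelian group on the set of symbols $\{[a] : a \in E\}$, and let $N \leq F$ be the subgroup generated by $[0]$ together with all expressions $[a \oplus b] - [a] - [b]$ for which $a \perp b$. Set $G_1 := F/N$, write $\iota(a)$ for the coset of $[a]$, declare $G_1^+$ to be the set of finite sums $\iota(a_1) + \cdots + \iota(a_n)$ with $a_i \in E$, and put $u_1 := \iota(1)$. It is straightforward that $G_1^+$ is a submonoid containing $u_1$ and that $G_1 = G_1^+ - G_1^+$, so $\iota(E)$ generates $G_1$ as a group; it is also immediate that $u_1$ is an order unit, since $a \leq 1$ in $E$ for all $a$.

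The main obstacle lies in proving that $G_1^+$ is conical and that $\iota : E \to [0, u_1]_{G_1}$ is a bijection. Here the assumption that $E$ is already an interval effect algebra is indispensable: fix some unital group $(H, v)$ and an embedding $j : E \hookrightarrow H$ realizing $E = [0, v]_H$, and view $j$ as a group-valued measure. By the very definition of $N$, the assignment $[a] \mapsto j(a)$ sends every generator of $N$ to $0$, so it descends to a group homomorphism $J : G_1 \to H$ with $J \circ \iota = j$. Since $j$ is injective, so is $\iota$. Moreover, any $x \in [0, u_1]_{G_1}$ can be written as $\iota(a_1) + \cdots + \iota(a_n)$ with $x \leq u_1$, and then $J(x) \in [0, v]_H = j(E)$; this lets one recover a unique $a \in E$ with $\iota(a) = x$, giving the surjectivity onto the interval. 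Conicity of $G_1^+$ transports the same way from conicity of $H^+$ via $J$ applied to the $\iota$-images of witnesses.

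For the universal property, given a group-valued measure $\phi : E \to [0, u_2]_{G_2}$, define $\widehat\phi$ on generators by $\widehat\phi(\iota(a)) := \phi(a)$. The defining relations of $N$ translate precisely into the identities $\phi(a \oplus b) = \phi(a) + \phi(b)$, so $\widehat\phi$ extends to a well-defined group homomorphism $G_1 \to G_2$; it sends $G_1^+$ into $G_2^+$ because $\phi(E) \subseteq G_2^+$, and $\widehat\phi(u_1) = \phi(1) = u_2$, so it is a morphism of unital groups. Uniqueness of $\widehat\phi$ is forced because $\iota(E)$ generates $G_1$ as a group, and uniqueness of $(G_1, u_1)$ up to isomorphism follows by the standard argument that two objects satisfying the same universal property are canonically isomorphic via the mutually induced extension maps.
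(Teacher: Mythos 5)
The paper offers no proof of this proposition --- it is imported verbatim from Bennett and Foulis \cite{BenFou:IaSEA} --- so there is no in-paper argument to compare against; your construction (universal enveloping group presented by generators and relations, controlled by a comparison homomorphism $J:G_1\to H$ coming from a fixed representation $E=[0,v]_H$) is the standard one. The universal property, the uniqueness of $\widehat\phi$ and of $(G_1,u_1)$, the injectivity of $\iota$, the conicity of $G_1^+$, and the order-unit property are all argued correctly.

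The one step that does not work as written is the surjectivity of $\iota$ onto $[0,u_1]_{G_1}$. You take $x=\iota(a_1)+\dots+\iota(a_n)$ with $0\leq x\leq u_1$, note that $J(x)\in[0,v]_H=j(E)$, and say this ``lets one recover a unique $a\in E$ with $\iota(a)=x$''. Writing $J(x)=j(a)$ only gives $J(\iota(a))=J(x)$, and to conclude $\iota(a)=x$ you would need $J$ to be injective --- which it is not in general and cannot be assumed, since $G_1$ is supposed to be universal while $(H,v)$ is an arbitrarily chosen representation (if $J$ were always injective, $G_1$ would coincide with the subgroup of $H$ generated by $j(E)$ and would depend on the choice of $H$). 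The correct route avoids injectivity of $J$: from $u_1-x\in G_1^+$ you get $v-J(x)=\sum_k j(b_k)\geq 0$, hence $j(a_1)+\dots+j(a_n)\leq v$; since each $j(a_i)\geq 0$, every partial sum lies in $[0,v]_H$, so by induction $a:=a_1\oplus\dots\oplus a_n$ is defined in $E$, and the defining relations of $N$ then give $x=\iota(a_1)+\dots+\iota(a_n)=\iota(a)$ directly. The same observation (applying $J$ to see that $\iota(a)+\iota(b)\leq u_1$ forces $a\perp b$ in $E$) is what makes $\iota$ an isomorphism of effect algebras rather than merely a bijection onto the interval, a point you should also record since the proposition asserts $E=[0,u_1]_{G_1}$ as effect algebras.
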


The unital group $(G_1,u_1)$ from Proposition~\ref{prop:ambient} is
called {\em the ambient group} of $E$, denoted by $G(E)$.

\begin{example}
$\mathcal S(\mathbb H)$ is the ambient group of $\mathcal E(\mathbb H)$.
\end{example}

\subsection{M\"obius inversion theorem}

We say that a partially ordered set $(P,\leq)$ is {\em locally finite} if and only if
every closed interval 
$$
[x,y]_P:=\{z\in P:x\leq z\leq y\}
$$
is a finite set.

let $(P,\leq)$ be a locally finite 
partially ordered set. Define $I(P)$ to be the set of all pairs
$(x,y)\in P\times P$ such that $(x\leq y)$.

There exists a unique function $\mu:I(P)\to\mathbb Z$ such that, for all $(x,y)\in I(P)$,
\begin{equation}\label{eq:mobius}
\sum_{x\leq z \leq y}\mu(x,z)=\delta_{x,y},
\end{equation}

where $\delta_{x,y}$ is the Kronecker delta. We say the $\mu$ is the 
{\em M\"obius mapping of the poset $P$}. We refer to the classical paper
\cite{Rot:OtfoctIToMF} and to the monograph \cite{Sta:EC} 
for more information on M\"obius mappings and related topics.

\begin{example}
Let $S$ be a set, write $\Fin(S)$ for the set of all
finite subsets of $S$. For the poset $(\Fin(S),\subseteq)$,
we have $\mu(X,Z)=(-1)^{|X|+|Z|}$.
\end{example}

\begin{theorem}[M\"obius inversion formula]
Let $G$ be an abelian group and let $f:I(P)\to G$. 
Define $$f^{\leq}(x,y):=\sum_{x\leq z\leq y}f(z,y).$$
Then
$$
f(x,y)=\sum_{x\leq z\leq y}\mu(x,z)f^{\leq}(z,y).
$$
\end{theorem}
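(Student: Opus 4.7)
The plan is a direct computation exploiting the defining identity~\eqref{eq:mobius} of $\mu$ together with the local finiteness of $P$. I would start from the right-hand side and expand $f^{\leq}(z,y)$ according to its definition, obtaining the double sum $\sum_{x\leq z\leq y}\mu(x,z)\sum_{z\leq w\leq y}f(w,y)$; this is a finite sum over pairs $(z,w)$ with $x\leq z\leq w\leq y$, finiteness being guaranteed because $[x,y]_P$ is a finite set by local finiteness.

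Next, I would interchange the order of summation, grouping by $w$ first and pulling $f(w,y)$ outside the inner sum, producing
$$\sum_{x\leq w\leq y}f(w,y)\sum_{x\leq z\leq w}\mu(x,z).$$
By the defining equation~\eqref{eq:mobius} of the M\"obius function, the inner sum equals $\delta_{x,w}$, so every term with $w\neq x$ vanishes and only the contribution $f(x,y)$ survives, which is exactly the asserted identity.

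I do not expect any real obstacle: the argument is the standard swap-and-collapse pattern that underlies every instance of M\"obius inversion. The only points meriting verification are the equivalence of the two descriptions of the summation region, namely $\{(z,w):x\leq z\leq y,\ z\leq w\leq y\}$ and $\{(z,w):x\leq z\leq w\leq y\}$, and the fact that local finiteness of $P$ legitimates the interchange; both are immediate.
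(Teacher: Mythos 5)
Your argument is correct: expanding $f^{\leq}(z,y)$, swapping the order of summation over the finite region $\{(z,w):x\leq z\leq w\leq y\}$, and collapsing the inner sum $\sum_{x\leq z\leq w}\mu(x,z)=\delta_{x,w}$ via the defining equation~\eqref{eq:mobius} is exactly the standard proof. The paper itself states this theorem without proof, deferring to the classical references, so there is nothing to compare against; your write-up supplies the expected argument and handles the only delicate points (identity of the two summation regions, finiteness of $[x,y]_P$ justifying the interchange) correctly.
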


We say that $f(x,y)$ is the M\"obius inversion of $f^{\leq}$.

\section{Witness mappings}

Let $E$ be an interval effect algebra in a partially ordered abelian group $G$.
Let $S\subseteq E$. Let us write $\Fin(S)$ for the set of all
finite subsets of $S$. Obviously, $(\Fin(S),\subseteq)$ is a locally
finite poset.

For every mapping $\beta:\Fin(S)\to G$, we define a
mapping $D_\beta:I(\Fin(S))\to G$.
For $(X,A)\in I(\Fin(S))$,
the value
$D_\beta(X,A)\in G$ is given by the rule
$$
D_\beta(X,A):=\sum_{X\subseteq Z\subseteq A}(-1)^{|X|+|Z|}\beta(Z).
$$

Note that there is an obvious connection to M\"obius inversions:
define
$\hat\beta:I(\Fin(S))\to G$ by
$$
\hat\beta(X,A)=\beta(X).
$$
Then $D_\beta$ is the M\"obius inversion
of $\hat\beta$ with respect to the poset $(\Fin(S),\subseteq)$.
By the M\"obius inversion formula we see that
\begin{equation}\label{eq:decomp}
\beta(X)=\hat\beta(X,A)=\sum_{X\leq Z\leq A}D_\beta(Z,A),
\end{equation}
for any $A\supseteq X$.
In particular, $A:=X$ yields $\beta(X)=D_\beta(X,X)$.

\begin{lemma}
\label{lemma:formal}
Let $E$ be an interval effect algebra in a partially ordered abelian group $G$.
Let $S$ be a subset of $E$, let $\beta:\Fin(S)\to G$.
For all $c\in S\setminus A$,
$$
D_\beta(X,A)=D_\beta(X,A\cup\{c\})+ D_\beta(X\cup\{c\},A\cup\{c\}).
$$
\end{lemma}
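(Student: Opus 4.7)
The plan is to prove this by direct expansion, since it is a purely formal combinatorial identity about signed sums indexed by subsets, closely analogous to Pascal's rule. The key observation is that the interval $[X, A\cup\{c\}]$ in the Boolean lattice $\Fin(S)$ splits cleanly into those $Z$ with $c\notin Z$ (which is a copy of $[X,A]$) and those $Z$ with $c\in Z$ (which is in bijection with $[X,A]$ via $Z\mapsto Z\setminus\{c\}$, shifting $|Z|$ by one).

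Concretely, I would start from the definition
$$D_\beta(X,A\cup\{c\})=\sum_{X\subseteq Z\subseteq A\cup\{c\}}(-1)^{|X|+|Z|}\beta(Z),$$
and partition the index set according to whether $c\in Z$. The $c\notin Z$ part is exactly $D_\beta(X,A)$ by definition. For the $c\in Z$ part, I would substitute $Z=Z'\cup\{c\}$ with $X\subseteq Z'\subseteq A$ (using that $c\notin A$, so this substitution is a bijection); since $|Z|=|Z'|+1$, the sign flips, giving
$$-\sum_{X\subseteq Z'\subseteq A}(-1)^{|X|+|Z'|}\beta(Z'\cup\{c\}).$$

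Next I would expand $D_\beta(X\cup\{c\},A\cup\{c\})$ by the same substitution: every $Z$ in the range of summation contains $c$, so writing $Z=Z'\cup\{c\}$ with $X\subseteq Z'\subseteq A$ gives $|Z|=|Z'|+1$ and $|X\cup\{c\}|=|X|+1$, so the overall sign is $(-1)^{|X|+|Z'|+2}=(-1)^{|X|+|Z'|}$, yielding
$$D_\beta(X\cup\{c\},A\cup\{c\})=\sum_{X\subseteq Z'\subseteq A}(-1)^{|X|+|Z'|}\beta(Z'\cup\{c\}).$$
This is exactly the negative of the $c\in Z$ contribution computed above, so adding the two right-hand quantities makes the $\beta(Z'\cup\{c\})$ terms telescope away, leaving $D_\beta(X,A)$.

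I do not expect any real obstacle: the identity uses no properties of the interval effect algebra or the ambient group beyond the fact that $G$ is an abelian group, and no property of $\beta$ at all. The only points of care are (i) verifying that $c\notin A$ is what makes the decomposition of $[X,A\cup\{c\}]$ into $c$-free and $c$-containing subsets disjoint and exhaustive, and (ii) correctly bookkeeping the shift in $|X|$ and $|Z|$ by one when $c$ is adjoined.
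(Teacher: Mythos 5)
Your proposal is correct and follows essentially the same route as the paper: split the interval $[X,A\cup\{c\}]$ according to whether $c\in Z$, identify the $c$-free part with $D_\beta(X,A)$, and match the $c$-containing part with $\pm D_\beta(X\cup\{c\},A\cup\{c\})$ via the bijection $Z\mapsto Z\setminus\{c\}$. Your sign bookkeeping is in fact slightly more careful than the paper's (whose final displayed equality drops a minus sign that is then implicitly absorbed when rearranging), so no changes are needed.
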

\begin{proof}
The proof is purely formal.
Let us rewrite
$$
D_\beta(X,A\cup\{c\})=\sum_{X\subseteq Z\subseteq A\cup\{c\}}(-1)^{|X|+|Z|}\beta(Z).
$$
For any $Z$ in the sum, either $c\in Z$ or $c\notin Z$.
If $c\in Z$, then $X\cup\{c\}\subseteq Z \subseteq A\cup \{c\}$.
If $c\notin Z$, then $X\subseteq Z\subseteq A$.
Consequently,
\begin{align*}
D_\beta(X,A\cup\{c\})=&\sum_{X\subseteq Z\subseteq A}(-1)^{|X|+|Z|}\beta(Z)+
\sum_{X\cup\{c\}\subseteq Z\subseteq A\cup \{c\}}(-1)^{|X|+|Z|}\beta(Z)=\\
=&D_\beta(X,A)+\sum_{X\cup\{c\}\subseteq Z\subseteq A\cup \{c\}}(-1)^{|X|+|Z|}\beta(Z)
\end{align*}
It remains to observe that
\begin{align*}
~&\sum_{X\cup\{c\}\subseteq Z\subseteq A\cup \{c\}}(-1)^{|X|+|Z|}\beta(Z)=\\
=&-\sum_{X\cup\{c\}\subseteq Z\subseteq A\cup \{c\}}(-1)^{|X\cup\{c\}|+|Z|}\beta(Z)=
D_\beta(X\cup\{c\},A\cup\{c\}).
\end{align*}
\end{proof}

\begin{definition}\label{def:cm}
Let $E$ be an interval effect algebra, let $S\subseteq E$.

We say that a mapping $\beta:\Fin(S)\to E$ is a {\em witness
mapping for $S$} if and only if the following conditions are satisfied.
\begin{enumerate}
\item[(A1)]$\beta(\emptyset)=1$,
\item[(A2)]for all $c\in S$, $\beta(\{c\})=c$,
\item[(A3)]for all $(X,A)\in I(\Fin(S))$, $D_\beta(X,A)\geq 0$.
\end{enumerate}
\end{definition}

Let us prove that our notion of a witness mapping can be considered as an extension
of the notion of a witness element.

\begin{proposition}
Let $E$ be an interval effect algebra in a partially ordered abelian group $G$.
Let $\{a,b\}$ be a subset of $E$ and let $\beta:\Fin(\{a,b\})\to E$
be a mapping satisfying the conditions (A1) and (A2) of Definition \ref{def:cm}.
Then $\beta$ is a witness map for $\{a,b\}$ if and only if $\beta(\{a,b\})$ 
is a witness for $a,b$.
\end{proposition}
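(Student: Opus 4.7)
The plan is to directly enumerate everything: $\Fin(\{a,b\})$ has only four elements $\emptyset, \{a\}, \{b\}, \{a,b\}$, so the poset $I(\Fin(\{a,b\}))$ has just ten pairs, and the condition (A3) reduces to a small finite checklist.

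First I would fix the notation $c := \beta(\{a,b\})$ and use (A1)--(A2) to compute $D_\beta(X,A)$ on all of $I(\Fin(\{a,b\}))$. The diagonal cases give $D_\beta(X,X)=\beta(X)$, which is $\geq 0$ automatically because $\beta$ takes values in $E\subseteq G^+$. The pairs $(\emptyset,\{a\})$ and $(\emptyset,\{b\})$ yield $1-a$ and $1-b$, again automatically positive since $a,b\in E=[0,1]_G$. This leaves exactly three non-automatic inequalities:
\begin{align*}
D_\beta(\{a\},\{a,b\}) &= a-c, \\
D_\beta(\{b\},\{a,b\}) &= b-c, \\
D_\beta(\emptyset,\{a,b\}) &= 1-a-b+c.
\end{align*}

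Next I would translate these three inequalities. The first two say precisely $c\leq a$ and $c\leq b$, so in the effect algebra $E$ the differences $a\ominus c$ and $b\ominus c$ are defined. The third inequality, rewritten as $a+(b-c)\leq 1$ in the ambient group $G$, says exactly that $a\oplus (b\ominus c)$ exists in $E$, i.e.\ $a\perp (b\ominus c)$. Thus (A3) is equivalent to the conjunction $c\leq a$, $c\leq b$, $a\perp(b\ominus c)$, which is by definition the statement that $c=\beta(\{a,b\})$ is a witness element for $a,b$ in the sense recalled after Definition~1.

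No real obstacle is expected here; the proof is essentially a verification. The only thing to be a bit careful about is to notice that the three ``trivial'' inequalities really are automatic, so that the equivalence is driven exactly by the three values involving $c$. I would present the argument as two implications packaged together, so that the bookkeeping of $D_\beta$ values is done only once.
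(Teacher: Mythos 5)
Your proposal is correct and follows essentially the same route as the paper: enumerate $I(\Fin(\{a,b\}))$, note that all cases except $(\{a\},\{a,b\})$, $(\{b\},\{a,b\})$, $(\emptyset,\{a,b\})$ are automatically nonnegative, and translate the remaining inequalities $a-c\geq 0$, $b-c\geq 0$, $1-a-b+c\geq 0$ into the witness-element conditions $c\leq a,b$ and $a\perp(b\ominus c)$ via the ambient group. (A trivial quibble: $I(\Fin(\{a,b\}))$ has nine pairs, not ten, but your case analysis is nonetheless exhaustive.)
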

\begin{proof}
The mapping $\beta:\Fin(\{a,b\})\to E$ given by
$$
\begin{array}{c|c|c|c|c}
X & \emptyset & \{a\} & \{b\} & \{a,b\}\\
\hline
\beta(X) & 1 & a & b & c 
\end{array}
$$

Suppose that $c$ is a witness element.
Let us prove the condition (A3).
If $X=A$, there is nothing to prove. If $X=\emptyset$ and $A=\{x\}$ then
$D_\beta(X,A)=1-x\geq 0$ since $x\leq 1$. If $X=\{x\}$ and $A=\{a,b\}$ then
$D_\beta(X,A)=x-c\geq 0$, since $c\leq x$. If $X=\emptyset$ and $A=\{a,b\}$
then
$$
D_\beta(X,A)=D_\beta(\emptyset,\{a,b\})=
1-a-b+c.
$$
As $c$ is a witness element for $a,b$, $a\ominus c\perp b$. This can be written
as $a\ominus c\leq 1\ominus b$. Therefore, $(1\ominus b)\ominus (a\ominus c)\geq 0$ and
we may compute in $G$
\begin{equation}\label{eq:refme}
(1\ominus b)\ominus(a\ominus c)=1-b-a+c=1-a-b+c=D_\beta(\emptyset,\{a,b\}).
\end{equation}
Thus, $D_\beta(\emptyset,\{a,b\})\geq 0$.

Suppose that $\beta$ is a witness map.
As $D_\beta(\{a\},\{a,b\})=a-c\geq 0$, we see that $a\geq c$. Similarly, $b\geq c$.
As $D_\beta(\emptyset,\{a,b\})=1-a-b+c\geq 0$,
the equality (\ref{eq:refme}) implies that $a\ominus c\perp b$. Thus, $c$ is a witness
element.
\end{proof}
As an obvious consequence, we obtain the following
\begin{corollary}
Let $\{a,b\}$ be a subset of an interval effect algebra $E$.
Then $\{a,b\}$ is coexistent if and only if there is a witness mapping for $\{a,b\}$.
\end{corollary}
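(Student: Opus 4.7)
The plan is to reduce the corollary to the immediately preceding Proposition together with the elementary characterization of coexistence of a two-element set recalled earlier in the paper, namely that $\{a,b\}$ is coexistent in $E$ if and only if there exists $c \in E$ with $c \leq a,b$ and $a \perp b \ominus c$ (the witness element condition). Given this, the corollary is essentially bookkeeping: a witness mapping on $\Fin(\{a,b\})$ is determined by its value at $\{a,b\}$, since (A1) and (A2) fix the other three values, and the Proposition tells us precisely that the value at $\{a,b\}$ can be chosen to satisfy (A3) exactly when that value is a witness element.

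For the forward direction, I would start from the assumption that $\{a,b\}$ is coexistent. By the remark recalled above, there is a witness element $c$ for $a,b$. I would then define $\beta \colon \Fin(\{a,b\}) \to E$ by the table $\beta(\emptyset) = 1$, $\beta(\{a\}) = a$, $\beta(\{b\}) = b$, $\beta(\{a,b\}) = c$. Conditions (A1) and (A2) hold by construction, and since $\beta(\{a,b\}) = c$ is a witness element for $a,b$, the Proposition delivers (A3), so $\beta$ is a witness mapping for $\{a,b\}$.

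For the reverse direction, suppose $\beta \colon \Fin(\{a,b\}) \to E$ is a witness mapping. Then (A1) and (A2) hold by definition, so the Proposition applies and shows that $\beta(\{a,b\})$ is a witness element for $a,b$. By the recalled equivalence, $\{a,b\}$ is coexistent.

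There is no substantive obstacle: the heavy lifting (translating between (A3) and the inequality $a\ominus c \leq 1\ominus b$ via the identity $D_\beta(\emptyset,\{a,b\}) = 1-a-b+c$) has already been carried out in the Proposition. The only point that requires a moment's attention is that (A1) and (A2) indeed pin down $\beta$ on all proper subsets of $\{a,b\}$, so that exhibiting or extracting the witness element amounts to prescribing or reading off the single value $\beta(\{a,b\})$; after that, the corollary is a direct application of the Proposition in both directions.
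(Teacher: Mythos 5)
Your proof is correct and is exactly the argument the paper intends: the corollary is stated there as ``an obvious consequence'' of the preceding Proposition combined with the earlier remark that $\{a,b\}$ is coexistent if and only if a witness element exists, which is precisely the two-directional bookkeeping you carry out.
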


\subsection{Properties of witness mappings}

To shorten our formulations, let us introduce some running notation:
\begin{itemize}
\item $E$ is an interval effect algebra,
\item $S$ is a subset of $E$, 
\item $\beta:\Fin(S)\to E$ is a witness mapping for $S$.
\end{itemize}

Let us prove that $D_\beta$ is, in fact, an $E$-valued mapping.

\begin{proposition}
\label{prop:Diseffect}
For all $(X,A)\in I(\Fin(S))$, $D_\beta(X,A)\leq 1$.
\end{proposition}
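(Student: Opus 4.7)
The plan is to induct on $|A \setminus X|$, using Lemma \ref{lemma:formal} to peel off one element of $A$ at a time and establish an antitonicity property of $D_\beta(X,\cdot)$ with respect to the second argument.

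For the base case $|A \setminus X| = 0$, we have $A = X$, so specializing equation (\ref{eq:decomp}) with $A := X$ yields $D_\beta(X,X) = \beta(X)$. Since $\beta$ takes values in $E$, and every element of $E = [0,u]_{G^+}$ satisfies $\beta(X) \leq 1$, the base case is immediate.

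For the inductive step, I pick any $c \in A \setminus X$ and apply Lemma \ref{lemma:formal} with $A \setminus \{c\}$ in place of $A$; since $c \in A \subseteq S$ and $c \notin A \setminus \{c\}$, the hypothesis of the lemma is satisfied. This gives
\[
D_\beta(X, A \setminus \{c\}) = D_\beta(X, A) + D_\beta(X \cup \{c\}, A),
\]
and rearranging,
\[
D_\beta(X, A) = D_\beta(X, A \setminus \{c\}) - D_\beta(X \cup \{c\}, A).
\]
By condition (A3) of Definition \ref{def:cm}, the subtracted term $D_\beta(X \cup \{c\}, A)$ is nonnegative, so $D_\beta(X, A) \leq D_\beta(X, A \setminus \{c\})$. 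Since $|(A \setminus \{c\}) \setminus X| = |A \setminus X| - 1$, the inductive hypothesis yields $D_\beta(X, A \setminus \{c\}) \leq 1$, and the conclusion follows.

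I do not anticipate any serious obstacle: the whole argument reduces to combining the recursive identity of Lemma \ref{lemma:formal} with the nonnegativity condition (A3), together with the trivial observation that $\beta(X) \leq 1$ in $E$. The only minor point to be careful about is to apply Lemma \ref{lemma:formal} in the ``backward'' direction, that is, to view $A$ as $(A \setminus \{c\}) \cup \{c\}$ rather than adding a fresh element outside $A$.
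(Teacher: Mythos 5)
Your proof is correct. It differs in presentation from the paper's: the paper gets the bound in one line from the M\"obius inversion identity (equation (\ref{eq:decomp})), which expresses $\beta(X)$ as $\sum_{X\subseteq Z\subseteq A}D_\beta(Z,A)$; since every summand is nonnegative by (A3), the single term $D_\beta(X,A)$ is at most the total, so $D_\beta(X,A)\leq\beta(X)\leq 1$. You instead run an induction on $|A\setminus X|$, peeling off one element at a time with Lemma~\ref{lemma:formal} and using (A3) to see that $D_\beta(X,\cdot)$ is antitone in the second argument, which telescopes to $D_\beta(X,A)\leq D_\beta(X,X)=\beta(X)\leq 1$. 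The two arguments rest on exactly the same ingredients (the inclusion--exclusion structure of $D_\beta$ plus (A3)); your induction is essentially the paper's one-line identity unrolled step by step. The paper's version is shorter and reuses machinery already set up; yours is more self-contained in that it only invokes Lemma~\ref{lemma:formal} and never needs the global inversion formula, and as a byproduct it makes explicit the monotonicity $D_\beta(X,A)\leq D_\beta(X,A\setminus\{c\})$, which is in the same spirit as how the paper later argues (e.g.\ in Lemma~\ref{lemma:zero}). Your application of Lemma~\ref{lemma:formal} ``backwards'' to $A\setminus\{c\}$ is legitimate, since $c\in S\setminus(A\setminus\{c\})$ and $X\subseteq A\setminus\{c\}$.
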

\begin{proof}
By equation (\ref{eq:decomp}),
$$
\beta(X)=\hat\beta(X,A)=\sum_{X\leq Z\leq A}D_\beta(X,Z).
$$
In particular, $D_\beta(X,A)\leq\beta(X)\leq 1$.
\end{proof}

Since $D_\beta$ is an $E$-valued mapping, the $+$ operation
mentioned in Lemma~\ref{lemma:formal} is an $\oplus$ partial operation of the
effect algebra $E$:
\begin{proposition}
\label{prop:base}
For all $(X,A)\in I(\Fin(S))$ and $c\in S\setminus A$,
$D_\beta(X,A\cup\{c\})\perp D_\beta(X\cup\{c\},A\cup\{c\})$ and
$$
D_\beta(X,A)=D_\beta(X,A\cup\{c\})\oplus D_\beta(X\cup\{c\},A\cup\{c\}).
$$
\end{proposition}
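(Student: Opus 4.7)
The plan is to observe that this proposition is essentially a repackaging of Lemma~\ref{lemma:formal} combined with Proposition~\ref{prop:Diseffect}, translating the group equation into an effect-algebra equation. The work has already been done; what remains is to verify that the terms satisfy the conditions that turn the group sum $+$ into the partial operation $\oplus$ of the interval effect algebra $E=[0,u]_G$.

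First I would recall that, by condition (A3) of Definition~\ref{def:cm}, both $D_\beta(X,A\cup\{c\})$ and $D_\beta(X\cup\{c\},A\cup\{c\})$ are nonnegative elements of $G$, and by Proposition~\ref{prop:Diseffect} they are bounded above by $1$, hence they belong to $E$. Next, by Lemma~\ref{lemma:formal} applied to $c\in S\setminus A$, we have in $G$ the identity
\[
D_\beta(X,A\cup\{c\}) + D_\beta(X\cup\{c\},A\cup\{c\}) = D_\beta(X,A),
\]
and the right-hand side is again bounded above by $1$ thanks to Proposition~\ref{prop:Diseffect}.

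Recall from the subsection on interval effect algebras that for $a,b\in[0,u]_G$ the partial sum $a\oplus b$ is defined precisely when $a+b\leq u$, in which case $a\oplus b=a+b$. Applying this with $a=D_\beta(X,A\cup\{c\})$ and $b=D_\beta(X\cup\{c\},A\cup\{c\})$, the inequality $a+b=D_\beta(X,A)\leq 1$ gives us $a\perp b$, and the value of $a\oplus b$ coincides with the group sum $a+b=D_\beta(X,A)$. This yields the desired decomposition
\[
D_\beta(X,A)=D_\beta(X,A\cup\{c\})\oplus D_\beta(X\cup\{c\},A\cup\{c\}).
\]

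There is no real obstacle here; the only subtlety is that Lemma~\ref{lemma:formal} is stated in the ambient group $G$, while the proposition is an assertion inside the effect algebra $E$. The bridge is Proposition~\ref{prop:Diseffect}, which guarantees the group sum stays inside $[0,1]_G$, so nothing must be verified beyond citing the two earlier results.
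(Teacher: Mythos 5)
Your proposal is correct and follows essentially the same route as the paper: invoke Lemma~\ref{lemma:formal} for the group identity, then use condition (A3) together with Proposition~\ref{prop:Diseffect} to place all three terms in $E$, so that the group sum $+$ is realized by the partial operation $\oplus$ of the interval effect algebra. You simply spell out the last step (that $a+b\leq 1$ forces $a\perp b$ and $a\oplus b=a+b$) more explicitly than the paper does.
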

\begin{proof}
By Lemma~\ref{lemma:formal},
$$
D_\beta(X,A)=D_\beta(X,A\cup\{c\})+D_\beta(X\cup\{c\},A\cup\{c\}).
$$
By Definition~\ref{def:cm} and Proposition~\ref{prop:Diseffect},
$D_\beta:I(Fin(S))\to E$.
\end{proof}
\begin{lemma}\label{lemma:zero}
Let $X\in\Fin(S)$. If $1\in S\setminus X$, then $D_\beta(X,X\cup\{1\})=0$.
\end{lemma}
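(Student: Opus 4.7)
The plan is to reduce the claim to an induction on $|X|$. First, observe that the closed interval $[X,X\cup\{1\}]$ in $(\Fin(S),\subseteq)$ consists of only the two sets $X$ and $X\cup\{1\}$, so directly from the definition of $D_\beta$,
$$D_\beta(X,X\cup\{1\})=\beta(X)-\beta(X\cup\{1\}),$$
and the statement is equivalent to $\beta(X)=\beta(X\cup\{1\})$.

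The base case $X=\emptyset$ is immediate from conditions (A1) and (A2) of Definition~\ref{def:cm}: $\beta(\emptyset)=1$ and $\beta(\{1\})=1$, so $D_\beta(\emptyset,\{1\})=1-1=0$. For the inductive step, assume the conclusion for all $Y\in\Fin(S)$ with $1\in S\setminus Y$ and $|Y|<|X|$. Pick any $c\in X$ and set $X':=X\setminus\{c\}$. Then $1\notin X'$, $|X'|<|X|$, and $c\in S\setminus (X'\cup\{1\})$. Proposition~\ref{prop:base}, applied to the pair $(X',X'\cup\{1\})$ with the element $c$, gives
$$D_\beta(X',X'\cup\{1\})=D_\beta(X',X\cup\{1\})\oplus D_\beta(X,X\cup\{1\}).$$
By the inductive hypothesis the left-hand side equals $0$. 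In any effect algebra, $a\oplus b=0$ forces $a=b=0$ (since $a\leq a\oplus b$), so both summands vanish; in particular $D_\beta(X,X\cup\{1\})=0$.

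There is no genuine obstacle here: once the recursive splitting of Proposition~\ref{prop:base} and the trivial base case are in hand, the induction runs itself. The one subtle point worth noting is that the cancellation must take place \emph{inside} $E$ rather than merely in the ambient group $G$; it is exactly the effect-algebra identity $a\oplus b=0\Rightarrow a=0$ that closes the induction, and this is legitimate because Proposition~\ref{prop:Diseffect} ensures that every value of $D_\beta$ appearing in the argument lies in $E$.
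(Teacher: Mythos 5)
Your proof is correct and is essentially the paper's proof: both run an induction on $|X|$ whose step is exactly the splitting $D_\beta(X',X'\cup\{1\})=D_\beta(X',X\cup\{1\})\oplus D_\beta(X,X\cup\{1\})$ from Proposition~\ref{prop:base}, followed by the observation that a summand of $0$ in $E$ must itself be $0$; you merely phrase the induction top-down (removing $c$ from $X$) where the paper phrases it bottom-up (adjoining $c$). The preliminary remark that $D_\beta(X,X\cup\{1\})=\beta(X)-\beta(X\cup\{1\})$ is correct but not needed for the argument.
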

\begin{proof}
(By induction with respect to $|X|$.)
If $X=\emptyset$, then 
$$
D_\beta(X,X\cup\{1\})=D_\beta(\emptyset,\{1\})=\beta(\emptyset)-\beta(\{1\})=1-1=0.
$$

Suppose that the Lemma is true for some $X$ and let $c\notin X$,
$c\neq 1$. We want to prove that $D_\beta(X\cup\{c\},X\cup\{c\}\cup\{1\})=0$.
Put $A=X\cup\{1\}$ in Proposition~\ref{prop:base} to obtain
$$
D_\beta(X,X\cup\{1\})=D_\beta(X,X\cup\{c\}\cup\{1\})\oplus D_\beta(X\cup\{c\},X\cup\{c\}\cup\{1\}).
$$
By the induction hypothesis, $D_\beta(X,X\cup\{1\})=0$, and since
$$
D_\beta(X,X\cup\{1\})\geq D_\beta(X\cup\{c\},X\cup\{c\}\cup\{1\}),
$$
we may conclude that $D_\beta(X\cup\{c\},X\cup\{c\}\cup\{1\})=0$.
\end{proof}

Later in Proposition \ref{prop:Dwedge} we will show that
for every MV-effect algebra $M$, $\bigwedge:\Fin(M)\to M$
is a witness mapping. The following proposition shows that
several properties of $\bigwedge$ are preserved for all witness
mappings.

\begin{proposition}~
\begin{enumerate}
\item[(a)]
$\beta$ is an antitone mapping from $(\Fin(S),\subseteq)$ to $(E,\leq)$.
\item[(b)]
For all $X\in\Fin(S)$, $\beta(X)$ is a lower bound of $X$.
\item[(c)]
Suppose that $0\in S$. If $0\in X\in\Fin(S)$, then $\beta(X)=0$.
\item[(d)]
Suppose that $1\in S$.
For all $X\in\Fin(S)$, $\beta(X)=\beta(X\cup\{1\})$
\end{enumerate}
\end{proposition}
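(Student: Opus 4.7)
The plan is to prove (a) first, since each of (b)--(d) will then follow in a few lines from (a) combined with material already in the excerpt.

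For (a), I would take $X\subseteq Y$ in $\Fin(S)$ and invoke the decomposition formula~(\ref{eq:decomp}) with $A:=Y$, giving $\beta(X)=\sum_{X\subseteq Z\subseteq Y}D_\beta(Z,Y)$. Pulling the $Z=Y$ term out of this sum isolates
\[
\beta(X)-\beta(Y)=\sum_{X\subseteq Z\subsetneq Y}D_\beta(Z,Y).
\]
By Proposition~\ref{prop:Diseffect} together with axiom~(A3), each summand on the right is an element of $E$, hence a non-negative element of $G$. Therefore $\beta(Y)\leq\beta(X)$ in $G$, and since both values lie in $E=[0,1]_G$, the inequality holds in $E$ as well.

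Part (b) is then immediate: for any $c\in X$ the inclusion $\{c\}\subseteq X$ together with antitonicity and axiom~(A2) yields $\beta(X)\leq\beta(\{c\})=c$. Part (c) is a one-line consequence of (b): if $0\in X$, then (b) forces $\beta(X)\leq 0$, and combining this with the fact that $\beta(X)\in E$ (hence $\beta(X)\geq 0$) gives $\beta(X)=0$.

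For (d), the case $1\in X$ is trivial, so I would assume $1\notin X$ and apply Proposition~\ref{prop:base} with $A:=X$ and $c:=1$ to obtain $\beta(X)=D_\beta(X,X\cup\{1\})\oplus D_\beta(X\cup\{1\},X\cup\{1\})=D_\beta(X,X\cup\{1\})\oplus\beta(X\cup\{1\})$. Lemma~\ref{lemma:zero} kills the first summand, leaving $\beta(X)=\beta(X\cup\{1\})$ as required. The only step that demands any genuine thought is (a); the minor obstacle there is merely checking that the group-level inequality obtained from (A3) transfers to the effect-algebra order, which is automatic because $E$ is by definition the interval $[0,1]_G$ whose order is the restriction of the order of $G$.
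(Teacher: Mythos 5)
Your proof is correct, and parts (b)--(d) coincide with the paper's argument essentially verbatim: (b) from antitonicity applied to $\{c\}\subseteq X$ together with (A2), (c) as an immediate consequence of (b), and (d) via Proposition~\ref{prop:base} with $A:=X$, $c:=1$ followed by Lemma~\ref{lemma:zero}. The only real divergence is in (a). The paper proves antitonicity one element at a time: it puts $A=X$ in Proposition~\ref{prop:base} to get $\beta(X)=D_\beta(X,X\cup\{c\})\oplus\beta(X\cup\{c\})\geq\beta(X\cup\{c\})$ and then finishes by induction on the number of added elements. You instead invoke the global decomposition~(\ref{eq:decomp}) with $A:=Y$, peel off the $Z=Y$ term (which equals $\beta(Y)$ since $D_\beta(Y,Y)=\beta(Y)$), and observe that the remaining summands are nonnegative by (A3); this dispenses with the induction in a single step, at the cost of using the full M\"obius inversion identity rather than its one-step instance. (Your appeal to Proposition~\ref{prop:Diseffect} is superfluous here --- axiom (A3) alone gives the nonnegativity you need --- and your closing remark that the group-level inequality transfers to $E$ because the order on $E=[0,1]_G$ is the restriction of the order on $G$ is exactly the right justification.) Both routes are sound; yours is marginally more direct, the paper's stays closer to the machinery (Proposition~\ref{prop:base}) that it reuses elsewhere.
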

\begin{proof}~
\begin{enumerate}
\item[(a)]
Let $X\in\Fin(S)$.
Let us prove that for any $c\in S\setminus X$, 
$\beta(X\cup \{c\})\leq\beta(X)$.
Put $X=A$ in Proposition~\ref{prop:base} to obtain
\begin{align*}
\beta(X)=D_\beta(X,X)=D_\beta(X,X\cup\{c\})\oplus 
	D_\beta(X\cup\{c\},X\cup\{c\})\geq\\
\geq D_\beta(X\cup\{c\},X\cup\{c\})
=\beta(X\cup\{c\}).
\end{align*}
The rest of the proof is a trivial induction.
\item[(b)]
Let $c\in X$. By (a), $\{c\}\subseteq X$ implies that
$$
c=\beta(\{c\})\geq\beta(X).
$$
\item[(c)]
Trivial, by (b).
\item[(d)]
If $1\in X$, there is nothing to prove.

Suppose that $1\not\in X$.
Putting $A=X$ and $c=1$ in Proposition~\ref{prop:base} yields
$$
D_\beta(X,X)=D_\beta(X,X\cup\{1\})\oplus D_\beta(X\cup\{1\},X\cup\{1\}).
$$
By Lemma~\ref{lemma:zero}, $D_\beta(X,X\cup\{1\})=0$, hence
$$
\beta(X)=D_\beta(X,X)=D_\beta(X\cup\{1\},X\cup\{1\})=\beta(X\cup\{1\}).
$$
\end{enumerate}
\end{proof}
\begin{proposition}\label{prop:pushbeta}
Let $E_1,E_2$ be interval effect algebras.
Let $\phi:E_1\to E_2$ be a
morphism of effect algebras. 
If $S_1\subseteq E_1$ is such that there is
a witness mapping $\beta_1$ of $S_1$, then $\phi(S_1)$ admits a
witness mapping.
\end{proposition}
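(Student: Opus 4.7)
The plan is to transport $\beta_1$ through $\phi$ by first choosing a preimage in $S_1$ of every element of $\phi(S_1)$. Concretely, using the axiom of choice, I fix a section $s:\phi(S_1)\to S_1$ with $\phi\circ s=\mathrm{id}_{\phi(S_1)}$; such a map is automatically injective. Extending $s$ to finite subsets by $s(Y):=\{s(y):y\in Y\}$ yields a map on $\Fin(\phi(S_1))$ that is injective, preserves cardinality, and respects inclusion. With this in hand, I would define
$$
\beta_2(Y):=\phi\bigl(\beta_1(s(Y))\bigr).
$$
Conditions (A1) and (A2) are then immediate, reducing to $\phi(1)=1$ and $\phi(s(y))=y$ respectively.

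The substantive step is verifying (A3). The cleanest route is to pass to ambient groups via Proposition~\ref{prop:ambient}: the morphism $\phi$, viewed as a group-valued measure $E_1\to[0,u_2]_{G(E_2)}$, extends to a morphism of unital groups $\widehat\phi:G(E_1)\to G(E_2)$, and as a morphism of unital groups it sends the positive cone into the positive cone. The defining sums of $D_{\beta_1}$ and $D_{\beta_2}$ live in $G(E_1)$ and $G(E_2)$ respectively, so $\widehat\phi$ can be applied termwise.

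The computation I would carry out is to reindex the sum in $D_{\beta_2}(Y,B)$ along the bijection $Z\mapsto s(Z)$ between $\{Z:Y\subseteq Z\subseteq B\}$ and $\{W:s(Y)\subseteq W\subseteq s(B)\}$, using that $s$ preserves cardinality. This should yield the identity
$$
D_{\beta_2}(Y,B)=\widehat\phi\bigl(D_{\beta_1}(s(Y),s(B))\bigr),
$$
after which the conclusion $D_{\beta_2}(Y,B)\geq 0$ follows from (A3) for $\beta_1$ and the positivity-preservation of $\widehat\phi$. I do not expect a serious obstacle: the only delicate point is the reindexing bookkeeping, and invoking Proposition~\ref{prop:ambient} is precisely what licenses the group-level manipulations. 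As an alternative one could avoid ambient groups by iterating Proposition~\ref{prop:base} to express $D_{\beta_1}(s(Y),s(B))$ as a nested $\oplus$-sum and then pushing $\phi$ through, but this seems notationally heavier than the ambient-group route.
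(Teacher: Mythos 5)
Your proposal is correct and follows essentially the same route as the paper: choose a section $p$ of $\phi$ on $S_1$, define $\beta_2(Y)=\widehat\phi\bigl(\beta_1(p(Y))\bigr)$ using the ambient-group extension $\widehat\phi$ from Proposition~\ref{prop:ambient}, and verify (A3) by reindexing the defining sum to get $D_{\beta_2}(Y,B)=\widehat\phi\bigl(D_{\beta_1}(p(Y),p(B))\bigr)\geq 0$. The only cosmetic difference is that you make explicit the injectivity and interval-bijection bookkeeping that the paper leaves implicit.
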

\begin{proof}
The mapping $\phi$ is a $G(E_2)$ valued measure on $E_1$. Therefore,
there is a morphism of unigroups $\widehat\phi:(G(E_1),1)\to(G(E_2),1))$
extending $\phi$.

For every $a\in\widehat\phi(S_1)$, fix $p(a)\in S_1$ such that $\widehat\phi(p(a))=a$.
Define $\beta_2:\Fin(\widehat\phi(S_1))\to E_2$ as follows:
$$
\beta_2(\{x_1,\dots,x_n\})=\widehat\phi\bigl(\beta_1(\{p(x_1),\dots,p(x_n)\})\bigr),
$$
or, in other words, for $X\in\Fin(S_2)$,
$\beta_2(X)=\widehat\phi\bigl(\beta_1\bigl(p(X)\bigr)\bigr)$.  Then $\beta_2$
is a witness mapping for $\widehat\phi(S_1)$.

Indeed, the conditions (A1) and (A2) are trivially satisfied. 
For the proof of (A3) we may compute
\begin{align*}
D_{\beta_2}(X,A)=\sum_{X\subseteq Z\subseteq A}(-1)^{|X|+|Z|}\beta_2(Z)=
\sum_{X\subseteq Z\subseteq A}(-1)^{|X|+|Z|}
	\widehat\phi\bigl(\beta_1\bigl(p(Z)\bigr)\bigr)=\\
=\widehat\phi\bigl(\sum_{X\subseteq Z\subseteq A}(-1)^{|X|+|Z|}\beta_1
	\bigl(p(Z)\bigr)\bigr)=
\widehat\phi\bigl(\sum_{p(X)\subseteq Y\subseteq p(A)}(-1)^{|p(X)|+|Y|}\beta_1(Y)\bigr).
\end{align*}
Since $\beta_1$ is a witness mapping,
$(-1)^{|p(X)|+|Y|}\beta_1(Y)\in E_1$. 
Therefore,
$$
\widehat\phi\bigl(\sum_{p(X)\subseteq Y\subseteq p(A)}(-1)^{|p(X)|+|Y|}\beta_1(Y)\bigr)\in E_2.
$$
\end{proof}

\begin{proposition}\label{prop:restriction}
For every $S_0\subseteq S$, the restriction of $\beta$ to $\Fin(S_0)$ is
a witness mapping for $S_0$.
\end{proposition}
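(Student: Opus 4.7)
The plan is to simply verify the three axioms (A1), (A2), (A3) for $\beta_0 := \beta\restriction_{\Fin(S_0)}$, using the fact that $\Fin(S_0)\subseteq\Fin(S)$ so that $\beta_0$ agrees with $\beta$ wherever it is defined. Conditions (A1) and (A2) are immediate: $\beta_0(\emptyset)=\beta(\emptyset)=1$, and for $c\in S_0\subseteq S$ we have $\beta_0(\{c\})=\beta(\{c\})=c$.

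For (A3), the key observation is that for any $(X,A)\in I(\Fin(S_0))$, every $Z$ with $X\subseteq Z\subseteq A$ satisfies $Z\subseteq A\subseteq S_0$, so $Z\in\Fin(S_0)$ and $\beta_0(Z)=\beta(Z)$. Consequently
\[
D_{\beta_0}(X,A)=\sum_{X\subseteq Z\subseteq A}(-1)^{|X|+|Z|}\beta_0(Z)
=\sum_{X\subseteq Z\subseteq A}(-1)^{|X|+|Z|}\beta(Z)=D_\beta(X,A),
\]
and the latter is $\geq 0$ because $\beta$ is a witness mapping for $S$. So there is no real obstacle; the whole content is the almost tautological remark that $D_\beta(X,A)$ only depends on the values of $\beta$ on the interval $[X,A]_{\Fin(S)}$, and this interval lies inside $\Fin(S_0)$ as soon as $A\in\Fin(S_0)$.
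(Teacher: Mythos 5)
Your proof is correct and matches the paper's intent exactly: the paper dismisses this proposition as ``Trivial,'' and your verification that $D_{\beta_0}(X,A)=D_\beta(X,A)$ for $(X,A)\in I(\Fin(S_0))$ is precisely the tautological observation being invoked.
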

\begin{proof}
Trivial.
\end{proof}

\section{Examples of witness mappings}
To show that the notion of a witness mapping is natural, we present
two examples of witness mappings, arising from the meet operation on a MV-effect algebra
and the product operation on a commuting subset of $\mathcal E(\mathbb H)$.
\begin{proposition}
\label{prop:Dwedge}
Let $M$ be an MV-effect algebra.
For the mapping
$\bigwedge:\Fin(M)\to M$,
$$
D_\wedge(X,A)=\bigwedge X\ominus\bigl((\bigwedge X)\wedge(\bigvee A\setminus X)\bigr).
$$
\end{proposition}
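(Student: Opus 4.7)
The plan is to compute $D_\wedge(X,A)$ directly from its defining sum by working inside the ambient lattice-ordered group $G(M)$ of the MV-effect algebra $M$ (Mundici's theorem, invoked in the paper's discussion of MV-effect algebras). Every interval effect algebra operation at issue here (the $\ominus$, the wedge, the alternating sum) is interpreted inside $G(M)$, so the identity is really one to prove in an $\ell$-group.

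First I would reindex the sum. Write $A\setminus X=\{c_1,\dots,c_k\}$ and set $a:=\bigwedge X$. Each $Z$ with $X\subseteq Z\subseteq A$ is uniquely of the form $Z=X\cup C$ for some $C\subseteq\{c_1,\dots,c_k\}$, and $\bigwedge Z=a\wedge\bigwedge_{c\in C}c$ (with $\bigwedge\emptyset=1$, so the $C=\emptyset$ term gives $a$). The factor $(-1)^{|X|}$ from the Möbius formula cancels against the $|X|$ part of $|Z|=|X|+|C|$, leaving
$$
D_\wedge(X,A)=\sum_{C\subseteq\{c_1,\dots,c_k\}}(-1)^{|C|}\Bigl(a\wedge\bigwedge_{c\in C}c\Bigr).
$$

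The key step is to recognise the right-hand side as $a$ minus a join. For this I invoke the classical inclusion-exclusion identity in an $\ell$-group,
$$
\bigvee_{j=1}^{k}b_j=\sum_{\emptyset\neq C\subseteq\{1,\dots,k\}}(-1)^{|C|+1}\bigwedge_{j\in C}b_j,
$$
applied to $b_j:=a\wedge c_j$. Since $\bigwedge_{j\in C}(a\wedge c_j)=a\wedge\bigwedge_{j\in C}c_j$ for nonempty $C$, splitting off the $C=\emptyset$ term in the previous display gives
$$
D_\wedge(X,A)=a-\bigvee_{j=1}^{k}(a\wedge c_j).
$$
By distributivity of the lattice, $\bigvee_{j=1}^{k}(a\wedge c_j)=a\wedge\bigvee_{j=1}^{k}c_j=a\wedge\bigvee(A\setminus X)$. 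The element $a\wedge\bigvee(A\setminus X)$ is below $a=\bigwedge X$, so the difference in $G(M)$ coincides with the $\ominus$ operation of $E$, and we obtain
$$
D_\wedge(X,A)=\bigwedge X\ominus\bigl((\bigwedge X)\wedge\bigvee(A\setminus X)\bigr),
$$
as required. The degenerate case $A=X$ is consistent: $\bigvee\emptyset=0$ in $G(M)$, so the formula collapses to $\bigwedge X=\beta(X)$, matching $D_\beta(X,X)=\beta(X)$.

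The only genuine obstacle is the inclusion-exclusion identity in the $\ell$-group; once that is in hand the proof is bookkeeping. One should be mildly careful that $\bigwedge\emptyset$ is interpreted as $1$ (so $a\wedge 1=a$ covers the $C=\emptyset$ term), that $\bigvee\emptyset=0$ (for $A=X$), and that the lattice operations distribute in an $\ell$-group, all of which are standard.
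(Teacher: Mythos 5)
Your proof is correct, but it takes a genuinely different route from the paper's. The paper proceeds by induction on $|A\setminus X|$: it uses the recursion of Lemma~\ref{lemma:formal} in the form $D_\wedge(X,A\cup\{c\})=D_\wedge(X,A)-D_\wedge(X\cup\{c\},A\cup\{c\})$, and then verifies the inductive step by a direct computation with the defining MV-identity $(a\vee b)\ominus a=b\ominus(a\wedge b)$ and lattice distributivity. You instead evaluate the alternating sum in closed form: after reindexing over $C\subseteq A\setminus X$ you recognise it as the max--min inclusion--exclusion identity
$$
\bigvee_{j=1}^{k}b_j=\sum_{\emptyset\neq C\subseteq\{1,\dots,k\}}(-1)^{|C|+1}\bigwedge_{j\in C}b_j
$$
applied to $b_j=a\wedge c_j$ in the ambient $\ell$-group $G(M)$. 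This is a legitimate and arguably more illuminating argument --- it explains the formula as inclusion--exclusion for the join $a\wedge\bigvee(A\setminus X)$ rather than discovering it term by term --- and your bookkeeping (the cancellation of $(-1)^{|X|}$, the conventions $\bigwedge\emptyset=1$ and $\bigvee\emptyset=0$, the identification of the group difference with $\ominus$ once $a\wedge\bigvee(A\setminus X)\leq a$) is all in order. The one thing you should not leave as a bare appeal to ``classical'': the inclusion--exclusion identity itself requires an argument in an $\ell$-group. It is easy to supply --- either by induction on $k$ from $b_1+b_2=(b_1\vee b_2)+(b_1\wedge b_2)$ together with distributivity, or by the subdirect representation of abelian $\ell$-groups into totally ordered groups, where it reduces to the familiar real-valued max--min identity --- but some such justification belongs in the proof, since it carries essentially all of the combinatorial content that the paper's induction handles explicitly.
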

\begin{proof}
The proof goes by induction with respect to $|A\setminus X|$.

If $|A\setminus X|=0$, then $A=X$ and
$$
D_\wedge(X,A)=D_\wedge(X,X)=\bigwedge X.
$$
For the right-hand side,
$$
\bigwedge X\ominus\bigl((\bigwedge X)\wedge(\bigvee A\setminus X)\bigr)=
\bigwedge X\ominus\bigl((\bigwedge X)\wedge 0\bigr)=
\bigwedge X.
$$

Let $n\in\mathbb N$, suppose that the Proposition is true for all pairs $(X,A)$ with
$|A\setminus X|\leq n$. Let $X,A_1\in\Fin(M)$ be such that $X\subseteq A_1$ and
$|A_1\setminus X|=n+1$. Pick $c\in A_1\setminus X$ and put $A:=A_1\setminus\{c\}$.
Then $c\notin A$ and $A_1=A\cup\{c\}$. 

By Lemma~\ref{lemma:formal},
$$
D_\wedge(X,A)=D_\wedge(X,A\cup\{c\})+D_\wedge(X\cup\{c\},A\cup\{c\}),
$$
hence
$$
D_\wedge(X,A\cup\{c\})=D_\wedge(X,A)-D_\wedge(X\cup\{c\},A\cup\{c\}).
$$
To abbreviate, let us write
$x:=\bigwedge X$, $a:=\bigvee A\setminus X$.
Note that $\bigvee (A\cup\{c\})\setminus(X\cup\{c\})=a$ and that
$\bigvee (A\cup\{c\})\setminus X=a\vee c$.
We need to prove that
$$
D_\wedge(X,A\cup\{c\})=x\ominus x\wedge(a\vee c).
$$
By the induction hypothesis, we may write
\begin{align*}
D_\wedge(X,A)=&x\ominus x\wedge a\\
D_\wedge(X\cup\{c\},A\cup\{c\})=&(x\wedge c)\ominus (x\wedge c\wedge a),
\end{align*}
therefore
$$
D_\wedge(X,A\cup\{c\})=
	(x\ominus x\wedge a)-\bigl((x\wedge c)\ominus (x\wedge c\wedge a)\bigr).
$$
Thus, it remains to prove that
$$
x\ominus x\wedge(a\vee c)=
	(x\ominus x\wedge a)-\bigl((x\wedge c)\ominus (x\wedge c\wedge a)\bigr),
$$
that means,
$$
\bigl((x\wedge c)\ominus (x\wedge c\wedge a)\bigr)+\bigl(x\ominus x\wedge(a\vee c)\bigr)=x\ominus x\wedge a.
$$
Since $M$ is an MV-effect algebra, we may use the 
equality \ref{eq:MV} and distributivity of $M$ as a lattice to compute
\begin{align*}
(x\wedge c)\ominus (x\wedge c\wedge a)=
\bigl((x\wedge c)\ominus \bigl((x\wedge c)\wedge (x\wedge a)\bigr)\bigr)=
\bigl((x\wedge c)\vee(x\wedge a)\bigr)\ominus(x\wedge a)=\\
=\bigl(x\wedge(c\vee a)\bigr)\ominus (x\wedge a)=
x\wedge(a\vee c)\ominus (x\wedge a),
\end{align*}
hence
\begin{align*}
\bigl((x\wedge c)\ominus (x\wedge c\wedge a)\bigr)+x\ominus x\wedge(a\vee c)=\\
=\bigl(x\wedge(a\vee c)\ominus (x\wedge a)\bigr)+\bigl(x\ominus x\wedge(a\vee c)\bigr)=
x\ominus x\wedge a.
\end{align*}
\end{proof}

\begin{corollary}\label{coro:wedgeiscm}
Let $M$ be an MV-effect algebra. The mapping $\bigwedge:\Fin(M)\to M$ is
a witness mapping.
\end{corollary}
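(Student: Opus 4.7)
The plan is to verify the three axioms (A1), (A2), (A3) of Definition~\ref{def:cm} directly for $\beta=\bigwedge$. The first two are immediate from the conventions on empty and singleton meets, so essentially all the content sits in (A3), which will follow as an easy corollary of the formula established in Proposition~\ref{prop:Dwedge}.

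First I would dispatch (A1): the empty meet in a bounded lattice equals the top element, so $\bigwedge\emptyset=1$. Condition (A2) is the tautology $\bigwedge\{c\}=c$. For (A3), let $(X,A)\in I(\Fin(M))$ and apply Proposition~\ref{prop:Dwedge} to rewrite
$$
D_\wedge(X,A)=\bigwedge X\ominus\bigl((\bigwedge X)\wedge(\bigvee A\setminus X)\bigr).
$$
Since $(\bigwedge X)\wedge(\bigvee A\setminus X)\leq\bigwedge X$ by definition of meet, the $\ominus$ on the right is well-defined inside $M$. Hence $D_\wedge(X,A)$ is an element of $M$, and in particular $D_\wedge(X,A)\geq 0$ in the ambient group $G(M)$, which is exactly (A3).

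There is no real obstacle here; the only small sanity checks are the boundary cases, which I would mention in passing: if $A=X$ then $A\setminus X=\emptyset$ so $\bigvee\emptyset=0$ and $D_\wedge(X,X)=\bigwedge X\ominus 0=\bigwedge X\geq 0$; if $X=\emptyset$ then $\bigwedge\emptyset=1$ and $D_\wedge(\emptyset,A)=1\ominus(1\wedge\bigvee A)=1\ominus\bigvee A\geq 0$ since $\bigvee A\leq 1$. All the combinatorial and MV-theoretic work has already been done in Proposition~\ref{prop:Dwedge}, so the corollary is essentially just the observation that an $\ominus$-difference of comparable elements is automatically non-negative.
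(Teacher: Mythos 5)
Your proposal is correct and follows essentially the same route as the paper: both verify (A1) and (A2) from the empty/singleton meet conventions and derive (A3) directly from the formula of Proposition~\ref{prop:Dwedge} together with the inequality $(\bigwedge X)\wedge(\bigvee A\setminus X)\leq\bigwedge X$. The boundary-case remarks you add are harmless but not needed beyond what the general formula already gives.
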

\begin{proof}
Clearly, the conditions (A1) and (A2) are satisfied.

Moreover, for any $(X,A)\in\Fin(M)$, 
$\bigwedge X\geq \bigl((\bigwedge X)\wedge(\bigvee A\setminus X)\bigr)$.
By Proposition~\ref{prop:Dwedge},
$$
D_\wedge(X,A)=\bigwedge X\ominus\bigl((\bigwedge X)\wedge(\bigvee A\setminus X)\bigr).
$$
Therefore, $D_\wedge(X,A)\geq 0$ and we see that (A3) is satisfied.
\end{proof}

\begin{corollary}
\label{coro:mvrange}
Let $E$ be an interval effect algebra,
let $M$ be an MV-effect algebra. Let $\phi:M\to E$ be a morphism of effect algebras.
Every $S\subseteq \phi(M)$ admits a witness mapping.
\end{corollary}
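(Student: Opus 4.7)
The plan is to chain together three results already established in the paper: Corollary \ref{coro:wedgeiscm}, Proposition \ref{prop:pushbeta}, and Proposition \ref{prop:restriction}. The statement essentially follows by bookkeeping once these are in place, so there is no real obstacle.

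First, I would apply Corollary \ref{coro:wedgeiscm} to $M$ itself: the map $\bigwedge\colon\Fin(M)\to M$ is a witness mapping for the full set $M\subseteq M$. This gives a witness mapping for the \emph{domain} side of $\phi$.

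Next, I would push this witness mapping through $\phi$ using Proposition \ref{prop:pushbeta}. That proposition, applied with $E_1:=M$, $E_2:=E$, $S_1:=M$ and $\beta_1:=\bigwedge$, produces a witness mapping $\beta_2\colon\Fin(\phi(M))\to E$ for the image $\phi(M)\subseteq E$.

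Finally, given $S\subseteq\phi(M)$, I would invoke Proposition \ref{prop:restriction} to restrict $\beta_2$ to $\Fin(S)$, obtaining a witness mapping for $S$. The only thing worth double-checking is that $\phi(M)$ is really of the form required by Proposition \ref{prop:pushbeta}, namely a subset of an interval effect algebra admitting the push-forward construction; since $M$ is an interval effect algebra (every MV-effect algebra is, as recalled in the subsection on MV-algebras) and $\phi$ is a morphism of effect algebras, this hypothesis is satisfied, and the corollary follows.
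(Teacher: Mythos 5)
Your proposal is correct and follows exactly the paper's own proof: Corollary~\ref{coro:wedgeiscm} gives the witness mapping $\bigwedge$ on $M$, Proposition~\ref{prop:pushbeta} pushes it forward to $\phi(M)$, and Proposition~\ref{prop:restriction} restricts it to $S$. Your extra check that $M$ is itself an interval effect algebra (so that Proposition~\ref{prop:pushbeta} applies) is a point the paper leaves implicit, and it is settled by the remark in the MV-algebra subsection that every MV-effect algebra is an interval effect algebra.
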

\begin{proof}
By 
Corollary~\ref{coro:wedgeiscm}, $M$ admits a witness mapping.
By Proposition~\ref{prop:pushbeta}, $\phi(M)$ admits a witness mapping.
By Proposition~\ref{prop:restriction}, $S$ admits a witness mapping.
\end{proof}
\begin{corollary}\label{coro:coexwitness}
Every coexistent subset of an interval effect algebra admits a witness mapping.
\end{corollary}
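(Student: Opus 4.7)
The plan is to observe that this corollary is an immediate consequence of Corollary~\ref{coro:mvrange}, combined with the earlier remark that every Boolean algebra is an MV-effect algebra. So there is essentially no new work to do; the proof is a one-line composition of results already established.

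First, I would unfold the definition of coexistence: if $S\subseteq E$ is coexistent, then by Definition~3 there is a Boolean algebra $B$ and an observable $\alpha:B\to E$ (that is, a morphism of effect algebras) with $S\subseteq\alpha(B)$. Next, I would invoke the fact, noted in Subsection~2.7, that every Boolean algebra is an MV-effect algebra; in particular $B$ is an MV-effect algebra, so $\alpha:B\to E$ is a morphism of effect algebras from an MV-effect algebra into the interval effect algebra $E$.

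Then I would apply Corollary~\ref{coro:mvrange} with $M:=B$ and $\phi:=\alpha$: the corollary states precisely that every subset of $\alpha(B)$ admits a witness mapping. Since $S\subseteq\alpha(B)$, we conclude that $S$ admits a witness mapping, finishing the proof.

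There is no real obstacle here; the only thing to be careful about is naming the ingredients correctly and pointing to the right earlier result. In particular, the work hidden behind this one-line proof is entirely contained in Corollary~\ref{coro:wedgeiscm} (that $\bigwedge$ is a witness mapping on an MV-effect algebra), Proposition~\ref{prop:pushbeta} (push-forward along a morphism), and Proposition~\ref{prop:restriction} (restriction to a subset), which are already bundled together in Corollary~\ref{coro:mvrange}.
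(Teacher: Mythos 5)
Your proposal is correct and follows exactly the paper's own argument: the paper's proof is the one-liner ``let the MV-effect algebra $M$ of Corollary~\ref{coro:mvrange} be a Boolean algebra,'' and you have simply spelled out the same specialization, correctly identifying that $B$ is an MV-effect algebra and that $S\subseteq\alpha(B)$ puts you in the scope of Corollary~\ref{coro:mvrange}.
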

\begin{proof}
Just let the MV-effect algebra $M$ of Corollary~\ref{coro:mvrange} be a Boolean algebra.
\end{proof}
Another natural example of a witness mapping is given by the following proposition,
which can be considered as a generalization of Theorem 2.2 of \cite{Gud:CoQE}.
\begin{proposition}\label{prop:prodiswitness}
Let $S$ be a pairwise commuting subset of $\mathcal E(\mathbb H)$.

Let $\Pi:\Fin(S)\to \mathcal E(\mathbb H)$ be 
given by
$$
\Pi(\{x_1,\dots,x_n\})=x_1.\dots.x_n.
$$

$\Pi$ is a witness mapping.
\end{proposition}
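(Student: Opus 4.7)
The plan is to verify conditions (A1)--(A3) of Definition~\ref{def:cm} directly. Conditions (A1) and (A2) are immediate from the definition of $\Pi$: the empty product is $I = 1$, and for a singleton $\{c\}$ we have $\Pi(\{c\}) = c$. The real work lies in (A3): for every $(X,A) \in I(\Fin(S))$ we must show $D_\Pi(X,A) \geq 0$ in the ambient group $\mathcal S(\mathbb H)$.

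The first step is to factor the defining sum. Every $Z$ with $X \subseteq Z \subseteq A$ has the form $Z = X \cup T$ for a unique $T \subseteq A \setminus X$; since the elements of $S$ pairwise commute, $\Pi(Z) = \Pi(X)\cdot\prod_{y \in T}y$ and $(-1)^{|X|+|Z|} = (-1)^{|T|}$. Thus
$$
D_\Pi(X,A) \;=\; \Pi(X)\sum_{T \subseteq A\setminus X}(-1)^{|T|}\prod_{y \in T}y \;=\; \Pi(X)\prod_{y \in A\setminus X}(I - y),
$$
where the last equality is the standard expansion of $\prod_i(I - y_i)$ into a signed sum of subproducts, which is legitimate precisely because the operators $y_i$ commute.

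It then suffices to observe that this is a product of pairwise commuting positive operators. Each $x \in X$ satisfies $0 \leq x \leq I$, and each $y \in A\setminus X$ satisfies $0 \leq I - y \leq I$, so every factor lies in $\mathcal E(\mathbb H) \subseteq \mathcal S(\mathbb H)^+$; and any two factors commute, by the standing hypothesis on $S$. Appealing to the standard fact that a product of finitely many pairwise commuting positive self-adjoint operators is again positive (proved for two operators via $AB = A^{1/2}BA^{1/2}$ and extended by induction), we conclude $D_\Pi(X,A) \geq 0$.

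The only real obstacle is the factorization step together with the appeal to positivity of a commuting product; both are routine, so the proposition reduces essentially to the inclusion--exclusion identity $\prod_i(I - y_i) = \sum_T (-1)^{|T|}\prod_{y \in T}y$ combined with the positivity lemma for commuting self-adjoint operators.
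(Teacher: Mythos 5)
Your proof is correct, and it reaches the same essential mechanism as the paper's, but packaged differently. The paper argues by induction on $|A\setminus X|$: using Lemma~\ref{lemma:formal} it shows $D_\Pi(X\cup\{c\},A\cup\{c\})=c\cdot D_\Pi(X,A)$ and hence $D_\Pi(X,A\cup\{c\})=(I-c)\cdot D_\Pi(X,A)$, peeling off one factor $(I-c)$ per step and invoking positivity of a product of two commuting positive operators at each stage. You instead derive the closed form
$$
D_\Pi(X,A)=\Pi(X)\prod_{y\in A\setminus X}(I-y)
$$
in one stroke via the inclusion--exclusion expansion (valid by commutativity), and then appeal once to positivity of a finite product of pairwise commuting positive operators. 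The two arguments are interchangeable --- your closed form is exactly what the paper's recursion produces after unwinding the induction --- but yours has the merit of exhibiting $D_\Pi(X,A)$ explicitly, in pleasing parallel with the formula for $D_\wedge(X,A)$ in Proposition~\ref{prop:Dwedge}, and of making transparent where commutativity is used. One small point worth making explicit in either version: the base positivity claim $0\leq\Pi(X)$ (and likewise $\Pi(X)\leq I$, needed for $\Pi$ to land in $\mathcal E(\mathbb H)$ at all) already rests on the same commuting-product lemma, so your single appeal to that lemma actually covers a step the paper's proof leaves tacit.
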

\begin{proof}
The proof goes by induction with respect to
$|A\setminus X|$.

For $A=X$, $D_\Pi(X,A)=D_\Pi(X,X)=\Pi(X)$ and $0\leq\Pi(X)$.

Let $n\in\mathbb N$.
Suppose that, for all $A,X\in\Fin(E)$
such that $|A\setminus X|=n$, $0\leq D_\Pi(X,A)$. 
Let $A_1,X\in\Fin(E)$ be such that 
$|A_1\setminus X|=n+1$. Pick $c\in A_1\setminus X$ and
write $A=A_1\setminus\{c\}$. We see that $c\notin A$ and
that $A_1=A\cup\{c\}$.
We shall prove that $0\leq D_\Pi(X,A\cup\{c\})$.

By Lemma~\ref{lemma:formal},
\begin{equation}\label{eq:reapply}
D_\Pi(X,A\cup\{c\})=D_\Pi(X,A)-D_\Pi(X\cup\{c\},A\cup\{c\})
\end{equation}
Let us compute
\begin{align*}
D_\Pi(X\cup\{c\},A\cup\{c\})=
\sum_{X\cup\{c\}\subseteq Z\subseteq A\cup\{c\}}(-1)^{|X\cup\{c\}|+|Z|}\Pi(Z)=\\
=\sum_{X\subseteq Y\subseteq A}(-1)^{|X\cup\{c\}|+|Y\cup\{c\}|}\Pi(Y\cup\{c\})=
\sum_{X\subseteq Y\subseteq A}(-1)^{|X|+|Y|}\Pi(Y\cup\{c\})=\\
=c.\sum_{X\subseteq Y\subseteq A}(-1)^{|X|+|Y|}\Pi(Y)=c.D_\Pi(X,A).
\end{align*}
Thus, by \ref{eq:reapply} we obtain
$$
D_\Pi(X,A\cup\{c\})=D_\Pi(X,A)-c.D_\Pi(X,A)=(I-c).D_\Pi(X,A)
$$
where $I$ is the identity operator.
By the induction hypothesis, $0\leq D_\Pi(X,A)$,
hence 
$$
0\leq (I-c).D_\Pi(X,A).
$$
\end{proof}

\section{Main result}

The goal of this section is to show that every subset of an interval effect algebra that
admits a witness mapping is coexistent. The main tools we shall use 
to achieve that goal are the following definition and theorem.

\begin{definition}\label{def:projective}(Definition 1.10.36 of \cite{DvuPul:NTiQS}.)
Let $E$ be an effect algebra.
\begin{enumerate}
\item[(a)]
Let $(H,\leq)$ be a directed set and $\Omega_i$ be a finite set for each $i\in H$.
Whenever $i,j\in H$ and $i\leq j$, let there be a mapping $g_{i,j}:\Omega_j\to\Omega_i$ and
denote by $\mathcal G:=\{g_{i,j}:i\leq j\}$ the collection of all such mappings.
The pair $((\Omega_i)_{i\in H},\mathcal G)$ is called a {\em projective system of finite sets} if the following conditions hold:
\begin{enumerate}
\item[(i)]
	$g_{i,i}$ is the identity map on $\Omega_i$,
\item[(ii)]
	$g_{i,j}\circ g_{j,k}=g_{i,k}$, whenever $i\leq j\leq k$.
\end{enumerate}
\item[(b)]
We say that $((\Omega_i,\alpha_i),\mathcal G)$ is a {\em projective system of simple observables}
if $((\Omega_i)_{i\in H},\mathcal G)$ is a projective system of finite sets and for each
$i\in H$, $\alpha_i:2^{\Omega_i}\to E$ is a simple observables such that
the following compatibility condition holds:
\begin{enumerate}
\item[(iii)] For all $X\in 2^{\Omega_i}$, $\alpha_i(X)=\alpha_{j}(g_{i,j}^{-1}(X))$ whenever $i\leq j$. 
\end{enumerate}
\end{enumerate}
\end{definition}

\begin{theorem}(Theorem 1.10.37 of \cite{DvuPul:NTiQS}.)\label{thm:projective}
Let $E$ be an effect algebra. A subset $S$ of $E$ is coexistent if and only if there is
a projective system of simple observables $((\Omega_i,\alpha_i),\mathcal G)$ such that
for every $a\in S$ there is an $i\in H$ such that $a\in\ran(\alpha_i)$.
\end{theorem}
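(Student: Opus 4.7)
The plan is to prove both directions by identifying coexistence with the existence of a compatible family of finite-dimensional data.

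For the easy direction, suppose $S$ is coexistent, witnessed by a morphism $\alpha:B\to E$ from a Boolean algebra $B$. For each $s\in S$ fix a preimage $b_s\in\alpha^{-1}(s)$, let $H:=\Fin(S)$ directed by inclusion, and for each $F\in H$ let $B_F$ be the finite Boolean subalgebra of $B$ generated by $\{b_s:s\in F\}$; its atoms form a finite set $\Omega_F$, so $B_F\cong 2^{\Omega_F}$, and the restriction $\alpha_F:=\alpha|_{B_F}$ is a simple observable. Whenever $F\subseteq F'$, the inclusion $B_F\hookrightarrow B_{F'}$ forces each atom of $B_{F'}$ to lie below a unique atom of $B_F$, defining $g_{F,F'}:\Omega_{F'}\to\Omega_F$. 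Conditions (i) and (ii) are then functoriality; for (iii) observe that in $B_{F'}$ each $X\in B_F$ equals the join of the atoms of $B_{F'}$ lying under it, i.e.\ $X=\bigvee g_{F,F'}^{-1}(X)$, and applying $\alpha$ to this identity gives $\alpha_F(X)=\alpha_{F'}(g_{F,F'}^{-1}(X))$. Finally each $a\in S$ lies in $\ran(\alpha_{\{a\}})$.

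For the converse, suppose $((\Omega_i,\alpha_i),\mathcal G)$ is a projective system covering $S$. The maps $g_{i,j}^{-1}:2^{\Omega_i}\to 2^{\Omega_j}$ form a direct system of Boolean-algebra morphisms, and condition (iii) says that $(\alpha_i)_{i\in H}$ is a cocone over this system. Let $B:=\varinjlim 2^{\Omega_i}$ be the directed colimit in the category of Boolean algebras, with canonical maps $\iota_i:2^{\Omega_i}\to B$; equivalently, via Stone duality, $B$ is the Boolean algebra of clopen subsets of the profinite inverse limit $\varprojlim\Omega_i$. By the universal property of $B$, the cocone assembles into a unique effect-algebra morphism $\alpha:B\to E$ with $\alpha\circ\iota_i=\alpha_i$, and for any $a\in S$, choosing $i$ and $X\in 2^{\Omega_i}$ with $\alpha_i(X)=a$ gives $a=\alpha(\iota_i(X))\in\ran(\alpha)$.

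The main technical hurdle is verifying that the cocone $(\alpha_i)$ genuinely descends to a well-defined and $\oplus$-preserving map on the colimit. In a directed colimit of Boolean algebras, $\iota_i(X)=\iota_j(Y)$ if and only if there is a common upper bound $k\geq i,j$ with $g_{i,k}^{-1}(X)=g_{j,k}^{-1}(Y)$ in $2^{\Omega_k}$; applying (iii) twice then yields $\alpha_i(X)=\alpha_k(g_{i,k}^{-1}(X))=\alpha_k(g_{j,k}^{-1}(Y))=\alpha_j(Y)$, giving well-definedness. For $\oplus$-preservation, two orthogonal elements $\iota_i(X),\iota_j(Y)$ of $B$ are represented, after climbing high enough in the directed system to some $k'\geq i,j$, by disjoint subsets $g_{i,k'}^{-1}(X)$ and $g_{j,k'}^{-1}(Y)$ of $\Omega_{k'}$; since $\alpha_{k'}$ is already an effect-algebra morphism the check reduces to a computation inside $2^{\Omega_{k'}}$. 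The slight subtlety that the $\iota_i$ need not be injective is handled precisely by this ``passing to a higher index'' trick, which is the only place the filtered (directed) nature of $H$ is essential.
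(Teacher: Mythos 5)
The paper does not actually prove this statement: it is imported verbatim as Theorem~1.10.37 of \cite{DvuPul:NTiQS}, so there is no in-paper argument to compare yours against. Your proof is correct and is the standard one. The forward direction (finitely generated subalgebras of $B$ are finite, their atoms give the $\Omega_F$, and refinement of atoms gives the bonding maps $g_{F,F'}$) is fine, including the check of (iii) via $X=\bigvee g_{F,F'}^{-1}(X)$ inside $B_{F'}$. For the converse, you are right to flag that the universal property of the colimit in the category of Boolean algebras does not literally apply, since $E$ is only an effect algebra; your hands-on verification using the explicit description of the directed colimit (well-definedness via a common upper index, and orthogonality of $\iota_i(X),\iota_j(Y)$ being witnessed by genuinely disjoint representatives at some $k'\geq i,j$) is exactly the point that needs care, and you handle it correctly. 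The only cosmetic remark is that in the forward direction one should say explicitly that $\alpha_F$ means $\alpha|_{B_F}$ transported along the isomorphism $2^{\Omega_F}\cong B_F$, $X\mapsto\bigvee X$, so that it is literally a simple observable in the sense of the definition; this is implicit in what you wrote.
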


Let us start with showing that every witness mapping $\beta:\Fin(S)\to E$ 
gives rise to a family of simple observables $\alpha_A:2^{2^A}\to E$, for every $A\in\Fin(S)$.

\begin{lemma}
\label{lemma:second}
Let $C,A,X\in\Fin(S)$ be such that $X\subseteq A$ and 
$C\cap A=\emptyset$.
Then $(D_\beta(X\cup Y,A\cup C))_{Y\subseteq C}$ is an orthogonal
family and
$$
\bigoplus_{Y\subseteq C}D_\beta(X\cup Y,A\cup C)=D_\beta(X,A).
$$
\end{lemma}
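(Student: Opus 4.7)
The plan is to prove both claims simultaneously by induction on $|C|$, using Proposition~\ref{prop:base} as the one-step building block and the associativity/commutativity of $\oplus$ to assemble the full orthogonal family.

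For the base case $|C|=0$, we have $C=\emptyset$, so the family $(D_\beta(X\cup Y,A\cup C))_{Y\subseteq C}$ consists of the single element $D_\beta(X,A)$, and the claim is trivial. For the inductive step, assume the result holds for sets of size $n$, and let $|C|=n+1$. Pick any $c\in C$ and set $C':=C\setminus\{c\}$, so $|C'|=n$ and $C'\cap A=\emptyset$. By the induction hypothesis applied to $C'$, the family $(D_\beta(X\cup Y,A\cup C'))_{Y\subseteq C'}$ is orthogonal and
$$
\bigoplus_{Y\subseteq C'}D_\beta(X\cup Y,A\cup C')=D_\beta(X,A).
$$

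Now, for each $Y\subseteq C'$, apply Proposition~\ref{prop:base} with the pair $(X\cup Y,A\cup C')\in I(\Fin(S))$ and the element $c\in S\setminus(A\cup C')$; this is legal because $c\notin C'$ and $c\notin A$. We obtain
$$
D_\beta(X\cup Y,A\cup C')=D_\beta(X\cup Y,A\cup C)\oplus D_\beta(X\cup Y\cup\{c\},A\cup C),
$$
using that $(A\cup C')\cup\{c\}=A\cup C$. Thus each term in the inductive orthogonal sum splits into two $\oplus$-orthogonal summands in $E$. Substituting these splittings into the induction hypothesis, and using commutativity and associativity of $\oplus$ on orthogonal families, we conclude that the family
$$
\bigl\{D_\beta(X\cup Y,A\cup C):Y\subseteq C'\bigr\}\cup\bigl\{D_\beta(X\cup Y\cup\{c\},A\cup C):Y\subseteq C'\bigr\}
$$
is orthogonal and sums to $D_\beta(X,A)$.

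It only remains to observe that the subsets of $C$ are in bijection with $\{Y:Y\subseteq C'\}\sqcup\{Y\cup\{c\}:Y\subseteq C'\}$, according to whether $c$ belongs to the subset or not. Rewriting the combined orthogonal family by this bijection yields exactly $(D_\beta(X\cup Y,A\cup C))_{Y\subseteq C}$, completing the induction. The main technical point to watch is that the two-step splittings must be combined coherently: because each split is into two elements whose $\oplus$ is already a known summand of the existing orthogonal sum, associativity of $\oplus$ in effect algebras guarantees that all partial sums exist and that the total sum is unchanged; no calculation beyond applying Proposition~\ref{prop:base} repeatedly is needed.
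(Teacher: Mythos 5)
Your proof is correct and follows essentially the same route as the paper: induction on $|C|$, splitting the subsets of $C$ according to whether they contain a chosen element $c$, applying Proposition~\ref{prop:base} to merge each such pair, and invoking the induction hypothesis. The only cosmetic difference is that you strip $c$ off a set of size $n+1$ while the paper adjoins $c$ to a set of size $n$; the argument is the same.
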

\begin{proof}
The proof goes by induction with respect to $|C|$.

For $C=\emptyset$, the lemma is trivially true.
Let $C$ be such that $|C|=n$ and let $c\in S,c\not\in A\cup C$.
Let us consider the family
$$
\bigl(D_\beta(X\cup Z,A\cup C\cup\{c\})\bigr)_{Z\subseteq C\cup\{c\}}.
$$
For each $Z\subseteq C\cup\{c\}$, either $c\in Z$ or
$c\not\in Z$, so either $Z=Y\cup\{c\}$ or $Z=Y$, for some
$Y\subseteq C$. Therefore, we can write
\begin{align*}
\bigl(D_\beta(X\cup Z,A\cup C\cup\{c\})\bigr)_{Z\subseteq C\cup\{c\}}=\\
=\bigl(D_\beta(X\cup Y,A\cup C\cup\{c\}),
	D_\beta(X\cup Y\cup\{c\},A\cup C\cup\{c\})\bigr)_{Y\subseteq C}.
\end{align*}
By Proposition~\ref{prop:base},
$$
D_\beta(X\cup Y,A\cup C\cup\{c\})\oplus
	D_\beta(X\cup Y\cup\{c\},A\cup C\cup\{c\})=
D_\beta(X\cup Y,A\cup C).
$$
It only 
remains to apply the induction hypothesis to finish the proof.
\end{proof}
\begin{corollary}
\label{coro:decomposition}
For every $A\in\Fin(S)$, $\bigl(D_\beta(X,A)\bigr)_{X\subseteq A}$ is a
decomposition of unit.
\end{corollary}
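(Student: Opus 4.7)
The corollary should follow almost immediately by specializing Lemma~\ref{lemma:second}. Recall that lemma takes three arguments $X, A, C$ (with $X \subseteq A$ and $C \cap A = \emptyset$) and asserts that $(D_\beta(X \cup Y, A \cup C))_{Y \subseteq C}$ is orthogonal with sum $D_\beta(X, A)$.

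The plan is to apply the lemma with both $X$ and the lemma's $A$ taken to be $\emptyset$, and its $C$ taken to be the $A$ of our corollary's statement. The side conditions $\emptyset \subseteq \emptyset$ and $A \cap \emptyset = \emptyset$ are trivially satisfied. The indexed family then becomes $(D_\beta(\emptyset \cup Y, \emptyset \cup A))_{Y \subseteq A} = (D_\beta(Y, A))_{Y \subseteq A}$, which is exactly the family appearing in the corollary. By the lemma, this family is orthogonal and its sum is $D_\beta(\emptyset, \emptyset)$.

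It remains only to identify $D_\beta(\emptyset, \emptyset)$ with $1$. By the definition of $D_\beta$, we have $D_\beta(\emptyset,\emptyset) = (-1)^{0+0} \beta(\emptyset) = \beta(\emptyset)$, which equals $1$ by axiom (A1) of Definition~\ref{def:cm}. Hence the family is orthogonal with orthogonal sum $1$, which is precisely the definition of a decomposition of unit.

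There is no real obstacle here; the entire content has been packaged into Lemma~\ref{lemma:second}, and this corollary is just the $X = A_{\text{lemma}} = \emptyset$ instance, combined with the normalization (A1). The only thing to be careful about is not to confuse the symbol $A$ of the lemma with the symbol $A$ of the corollary --- this is why in the proof one should rename variables when invoking the lemma.
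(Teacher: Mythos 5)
Your proof is correct and is exactly the paper's own argument: both apply Lemma~\ref{lemma:second} with $X=A_{\text{lemma}}=\emptyset$ and $C$ equal to the corollary's $A$, then identify $D_\beta(\emptyset,\emptyset)=\beta(\emptyset)=1$ via (A1). Nothing is missing.
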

\begin{proof}
By Lemma~\ref{lemma:second},
$$
\bigoplus_{X\subseteq A}D_\beta(\emptyset\cup X,\emptyset\cup A)
=D_\beta(\emptyset,\emptyset)=\beta(\emptyset)=1
$$
\end{proof}
\begin{corollary}
\label{coro:betaAobservable}
For every $A\in\Fin(S)$, the mapping
$\alpha_A:2^{(2^A)}\to E$ given by
$$
\alpha_A(\mathbb X)=\bigoplus_{X\in\mathbb X}D_\beta(X,A)
$$
is a simple observable.
\end{corollary}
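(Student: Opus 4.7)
The plan is to verify directly the two axioms (EM1) and (EM2) of a morphism of effect algebras, using Corollary~\ref{coro:decomposition} as the key input. Since $A$ is finite, $2^{(2^A)}$ is a finite Boolean algebra, so once $\alpha_A$ is shown to be a morphism, it is automatically a simple observable.

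First I would observe that $\alpha_A$ is well defined. The family $(D_\beta(X,A))_{X\subseteq A}$ is orthogonal by Corollary~\ref{coro:decomposition}, and any sub-family of a finite orthogonal family in an effect algebra is again orthogonal (a standard consequence of (E1), (E2), applied inductively). Consequently the partial sum $\bigoplus_{X\in\mathbb X}D_\beta(X,A)$ exists in $E$ for every $\mathbb X\subseteq 2^A$, and by commutativity and associativity of $\oplus$ it does not depend on the order of summation.

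Next, (EM1) is immediate: the top element of the Boolean algebra $2^{(2^A)}$ is $2^A$ itself, and
$$
\alpha_A(2^A)=\bigoplus_{X\subseteq A}D_\beta(X,A)=1
$$
by Corollary~\ref{coro:decomposition}. For (EM2), recall that in $2^{(2^A)}$ viewed as an effect algebra, $\mathbb X\oplus\mathbb Y$ is defined precisely when $\mathbb X\cap\mathbb Y=\emptyset$, and in that case $\mathbb X\oplus\mathbb Y=\mathbb X\cup\mathbb Y$. For such $\mathbb X,\mathbb Y$, the concatenated family $(D_\beta(X,A))_{X\in\mathbb X\cup\mathbb Y}$ is a sub-family of the orthogonal family of Corollary~\ref{coro:decomposition}, hence orthogonal, and splitting the sum along the partition $\mathbb X\cup\mathbb Y=\mathbb X\sqcup\mathbb Y$ gives
$$
\alpha_A(\mathbb X\cup\mathbb Y)=\bigoplus_{X\in\mathbb X}D_\beta(X,A)\oplus\bigoplus_{X\in\mathbb Y}D_\beta(X,A)=\alpha_A(\mathbb X)\oplus\alpha_A(\mathbb Y).
$$

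There is no real obstacle here; the only non-trivial fact used is that $(D_\beta(X,A))_{X\subseteq A}$ is a decomposition of unit, and that is exactly what Corollary~\ref{coro:decomposition} delivers. Everything else is bookkeeping around the fact that a decomposition of unit indexed by a finite set $\Omega$ canonically induces a simple observable on the power-set Boolean algebra $2^\Omega$.
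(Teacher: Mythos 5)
Your proposal is correct and follows essentially the same route as the paper: both rest entirely on Corollary~\ref{coro:decomposition}, the paper noting that the images of the atoms $\{X\}$ form a decomposition of unit and declaring the rest trivial, while you simply write out that trivial remainder ((EM1), (EM2), and well-definedness of the partial sums) explicitly.
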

\begin{proof}
The atoms of $2^{(2^A)}$ are of the form $\{X\}$, where
$X\subseteq A$.
By Corollary~\ref{coro:decomposition},
$(\alpha_A(\{X\}):X\subseteq A)$ is a decomposition of unit;
the remainder of the proof is trivial.
\end{proof}

\begin{theorem}
\label{thm:main}
Let $E$ be an interval effect algebra. $S\subseteq E$ admits a witness
mapping if and only if $S$ is coexistent.
\end{theorem}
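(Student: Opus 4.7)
The ``only if'' direction is already established as Corollary~\ref{coro:coexwitness}, so the remaining content is to show that if $S$ admits a witness mapping $\beta$, then $S$ is coexistent. The plan is to apply Theorem~\ref{thm:projective} by building a projective system of simple observables out of the family $(\alpha_A)_{A\in\Fin(S)}$ supplied by Corollary~\ref{coro:betaAobservable}.

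Take the directed set $(H,\leq)$ to be $(\Fin(S),\subseteq)$, which is directed since the union of two finite subsets is finite. For each $A\in\Fin(S)$, the ``state space'' is $\Omega_A:=2^A$, and for $A\subseteq B$ define $g_{A,B}:2^B\to 2^A$ by $g_{A,B}(Y):=Y\cap A$. Conditions (i) and (ii) of Definition~\ref{def:projective} are immediate: $g_{A,A}$ is the identity, and $g_{A,B}\circ g_{B,C}(Z)=(Z\cap B)\cap A=Z\cap A=g_{A,C}(Z)$ for $A\subseteq B\subseteq C$. The simple observables $\alpha_A:2^{(2^A)}\to E$ are those of Corollary~\ref{coro:betaAobservable}.

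The main step is to verify the compatibility condition (iii): for $A\subseteq B$ and $\mathbb X\subseteq 2^A$,
$$
\alpha_A(\mathbb X)=\alpha_B\bigl(g_{A,B}^{-1}(\mathbb X)\bigr).
$$
Writing $C:=B\setminus A$, the preimage $g_{A,B}^{-1}(\mathbb X)$ consists precisely of those $Y\subseteq B$ of the form $Y=X\cup Z$ with $X\in\mathbb X$ and $Z\subseteq C$. Lemma~\ref{lemma:second} (applied with the $A,X,C$ there taken to be $A,X,C$ here) yields
$$
D_\beta(X,A)=\bigoplus_{Z\subseteq C}D_\beta(X\cup Z,B)
$$
for each $X\in\mathbb X$. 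Orthogonality of the union over $X\in\mathbb X$ follows from Corollary~\ref{coro:decomposition}, which tells us that $\bigl(D_\beta(Y,B)\bigr)_{Y\subseteq B}$ is a decomposition of unit in $E$. Summing over $X\in\mathbb X$ gives exactly $\alpha_B(g_{A,B}^{-1}(\mathbb X))$ on the right and $\alpha_A(\mathbb X)$ on the left. This is where the real work sits; everything else is bookkeeping.

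Finally, every $a\in S$ lies in the range of some $\alpha_i$: take $A:=\{a\}$, so that $\alpha_{\{a\}}(\{\{a\}\})=D_\beta(\{a\},\{a\})=\beta(\{a\})=a$ by (A2) and the identity $\beta(X)=D_\beta(X,X)$ from~(\ref{eq:decomp}). Theorem~\ref{thm:projective} then delivers that $S$ is coexistent, completing the proof. The only genuine obstacle is the verification of (iii); once Lemma~\ref{lemma:second} is available, the rest reduces to checking that the refinement maps $Y\mapsto Y\cap A$ are compatible with the ``$D_\beta$-refinement'' of unit decompositions, which is precisely what that lemma expresses.
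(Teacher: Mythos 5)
Your proposal is correct and follows essentially the same route as the paper's own proof: the same projective system ($H=\Fin(S)$, $\Omega_A=2^A$, $g_{A,B}(Y)=Y\cap A$, the observables $\alpha_A$ from Corollary~\ref{coro:betaAobservable}), with condition (iii) verified via Lemma~\ref{lemma:second} and the converse handled by Corollary~\ref{coro:coexwitness}. Nothing further is needed.
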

\begin{proof}
One implication is Corollary \ref{coro:coexwitness}.

To prove the other implication, we shall apply Theorem \ref{thm:projective}.
Let $\beta$ be a witness mapping.
\begin{itemize}
\item $H=\Fin(S)$, ordered by inclusion.
\item For all $A\in H$, $\Omega_A:=2^A$.
\item For $U,V\in H$ with $U\subseteq V$, $g_{U,V}:\Omega_{V}\to\Omega_{U}$
is given by the rule
$g_{U,V}(X)=X\cap U$ and $\mathcal G$ is the collection of all such $g_{U,V}$.
\item For all $A\in H$, $\alpha_A:2^{\Omega(A)}\to E$ is given by the
rule
$$
\alpha_A(\mathbb X)=\bigoplus_{X\in\mathbb X}D_\beta(X,A)
$$
\end{itemize}

Let us prove that 
$((\Omega_A,\alpha_a),\mathcal G)$ is a projective system of simple observables.
The conditions (i) and (ii) of Definition \ref{def:projective} are easy to check.
By Corollary 7, every $\alpha_A$ is an observable.
Since
$$
\alpha_{\{a\}}(\{\{a\}\})=D_\beta(\{a\},\{a\})=\beta(\{a\})=a,
$$
every $a\in S$ is in the range of the observable 
$\alpha_{\{a\}}$.
It remains to prove the condition (iii).

It is easy to see that
$$
g_{U,V}^{-1}(\mathbb X)=\{X\cup C_0:X\in\mathbb X
	\text{ and }C_0\subseteq (V\setminus U)\}
$$

For all $\mathbb X\in 2^{(2^U)}$,
\begin{align*}
\alpha_V\bigl(g_{U,V}^{-1}(\mathbb X)\bigr)=
	\alpha_V\bigl(\{X\cup C_0:X\in\mathbb X
	\text{ and }C_0\subseteq (V\setminus U)\}\bigr)=\\
=\bigoplus\bigl(
	D_\beta(X\cup C_0,V):X\in\mathbb X
	\text{ and }C_0\subseteq (V\setminus U)
	\bigr)=\\
=\bigoplus_{X\in\mathbb X}\bigl(
\bigoplus_{C_0\subseteq (V\setminus U)}
D_\beta(X\cup C_0,B)\bigr)
\end{align*}
Put $Y:=C_0$, $C:=V\setminus U$; by Lemma~\ref{lemma:second},
$$
\bigoplus_{C_0\subseteq (V\setminus U)}
D_\beta(X\cup C_0,B)=D_\beta(X,U).
$$
Therefore,
$$
\alpha_V\bigl(g_{U,V}^{-1}(\mathbb X)\bigr)=\bigoplus_{X\in\mathbb X} D_\beta(X,U)=
\alpha_U(\mathbb X),
$$
and the condition (iii) is satisfied.

All the conditions of Theorem \ref{thm:projective} are satisfied. Therefore,
$S$ is a coexistent subset of $E$.
\end{proof}
\begin{corollary}\label{coro:mviscoex}
Every MV-effect algebra is a coexistent subset of itself.
\end{corollary}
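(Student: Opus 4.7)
The plan is to assemble this corollary directly from two pieces already at hand: Corollary~\ref{coro:wedgeiscm} and Theorem~\ref{thm:main}. First, I would observe that the setting of Theorem~\ref{thm:main} is available, because the paper has recorded (in the MV-effect algebra subsection, citing \cite{Mun:IoAFCSAiLSC}) that every MV-effect algebra $M$ is isomorphic to $[0,u]_G$ for a lattice-ordered abelian group $G$, hence is an interval effect algebra. So I may legitimately take $E := M$ and $S := M$ in the hypotheses of Theorem~\ref{thm:main}.

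Next, I would produce a witness mapping for $S = M$. Corollary~\ref{coro:wedgeiscm} delivers one for free: the mapping $\bigwedge : \Fin(M) \to M$ is a witness mapping (the conditions (A1), (A2) are immediate from $\bigwedge \emptyset = 1$ and $\bigwedge\{c\} = c$, while (A3) follows from the closed-form expression
\[
D_\wedge(X,A) = \bigwedge X \ominus \bigl((\bigwedge X) \wedge (\bigvee A \setminus X)\bigr) \geq 0
\]
established in Proposition~\ref{prop:Dwedge}). Taking $S = M$ itself is allowed since $\Fin(M)$ is exactly the domain of this witness mapping.

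Finally, I would invoke Theorem~\ref{thm:main}: since $M$ is an interval effect algebra and $S = M$ admits a witness mapping, $M$ is a coexistent subset of itself. This gives a Boolean algebra $B$ and an observable $\alpha : B \to M$ whose range contains all of $M$, which is the desired conclusion.

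There is no substantive obstacle here; all the heavy lifting has already been done. The only thing that requires a moment of care is the interval-algebra hypothesis of Theorem~\ref{thm:main}, and this is precisely what the Mundici representation theorem guarantees for MV-effect algebras. The proof is therefore a two-line composition: apply Corollary~\ref{coro:wedgeiscm} to get the witness mapping $\bigwedge$, then apply Theorem~\ref{thm:main} with $S = M$.
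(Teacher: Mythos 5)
Your proof is correct and is essentially identical to the paper's own argument, which also cites Corollary~\ref{coro:wedgeiscm} together with Theorem~\ref{thm:main}. The extra remark that the interval-algebra hypothesis is supplied by Mundici's representation of MV-effect algebras is a sensible piece of due diligence, already recorded in the paper's preliminaries.
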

\begin{proof}
By Corollary \ref{coro:wedgeiscm} and Theorem \ref{thm:main}.
\end{proof}
We note that Corollary \ref{coro:mviscoex} was already proved in \cite{Jen:BARGbMVEA},
using a different method.
\begin{corollary}
Let $\mathbb H$ be a Hilbert space.
Every pairwise commuting subset of $\mathcal E(\mathbb H)$ is coexistent.
\end{corollary}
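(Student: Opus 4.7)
The plan is to observe that this corollary follows immediately by combining two results already established in the paper: Proposition~\ref{prop:prodiswitness} and Theorem~\ref{thm:main}.

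First, I would let $S\subseteq\mathcal E(\mathbb H)$ be a pairwise commuting set of effects. Since the elements of $S$ pairwise commute, Proposition~\ref{prop:prodiswitness} applies and tells us that the product mapping $\Pi:\Fin(S)\to\mathcal E(\mathbb H)$, defined by $\Pi(\{x_1,\ldots,x_n\})=x_1\cdots x_n$, is a witness mapping for $S$. (Note that pairwise commutativity is what makes the product well-defined as an element of $\mathcal E(\mathbb H)$ independent of ordering, and is also exactly what the proof of Proposition~\ref{prop:prodiswitness} relies on when it writes $c\cdot D_\Pi(X,A)$ and needs this to be a positive operator.)

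Second, I would invoke Theorem~\ref{thm:main}, which states that a subset of an interval effect algebra is coexistent if and only if it admits a witness mapping. Since $\mathcal E(\mathbb H)=[0,I]_{\mathcal S(\mathbb H)}$ is an interval effect algebra, and $S$ admits the witness mapping $\Pi$ by the first step, we conclude that $S$ is coexistent. This completes the proof.

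There is no real obstacle here: both of the heavy lifting lemmas have already been carried out (Proposition~\ref{prop:prodiswitness} for existence of the witness mapping on commuting families, and Theorem~\ref{thm:main} for the equivalence of witness mappings and coexistence). The only thing to verify, should one wish to be fully explicit, is that Proposition~\ref{prop:prodiswitness} is stated for an arbitrary pairwise commuting subset without any finiteness assumption on $S$, which is precisely the case.
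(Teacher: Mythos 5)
Your proposal is correct and follows exactly the same route as the paper, which proves the corollary by citing Proposition~\ref{prop:prodiswitness} (the product mapping $\Pi$ is a witness mapping for a pairwise commuting set) together with Theorem~\ref{thm:main}. Your additional remarks about well-definedness of the product and the absence of a finiteness assumption are accurate but not needed beyond what the paper already establishes.
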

\begin{proof}
By Corollary \ref{prop:prodiswitness} and Theorem \ref{thm:main}.
\end{proof}

\end{document}